\let\footnote=\endnote
\declaretheorem[name=Theorem]{theorem}
\declaretheorem[name=Resource, sibling=theorem]{resource}
\declaretheorem[name=Lemma, sibling=theorem]{lemma}
\declaretheorem[name=Definition, sibling=theorem, style=definition]{definition}
\declaretheorem[name=Remark, sibling=theorem, style=definition]{remark}
\declaretheorem[name=Example, sibling=theorem, style=definition]{example}
\newcommand{\CC}{\mathbb{C}}
\newcommand{\NN}{\mathbb{N}}
\newcommand{\eps}{\varepsilon}
\newcommand{\GL}{\mathrm{GL}}
\newcommand{\bH}{\mathbf{H}}
\newcommand{\bdisj}{\mathbf{Disjoint}}
\newcommand{\blattice}{\mathbf{Lattice}}
\newcommand{\bfan}{\mathbf{Fan}}
\newcommand{\bkagome}{\mathbf{Kagome}}
\newcommand{\bStr}{\mathbf{Strassen}}
\newcommand{\disj}{\mathrm{Disjoint}}
\newcommand{\fan}{\mathrm{Fan}}
\newcommand{\lattice}{\mathrm{Lattice}}
\newcommand{\kagome}{\mathrm{Kagome}}
\newcommand{\Str}{\mathrm{Strassen}}
\newcommand{\Young}{R}
\newcommand{\GHZ}{\operatorname{GHZ}}
\newcommand{\EPR}{\operatorname{EPR}}
\newcommand{\W}{\operatorname{W}}
\newcommand{\rk}{\operatorname{rk}}
\newcommand{\id}{\mathbbm{1}}
\newcommand{\ot}{\otimes}
\newcommand{\geqdeg}{\trianglerighteq}
\newcommand{\geqas}{\gtrsim}
\newcommand{\asymprank}{\underaccent{\tilde}{R}}
\newcommand{\asympsub}{\underaccent{\tilde}{Q}}
\title{The Tensor as an Informational Resource}
\author{Matthias Christandl\thanks{christandl@math.ku.dk}}
\affil{University of Copenhagen, Copenhagen, Denmark}
\date{16 September 2024}
\begin{document}

\maketitle

\begin{abstract}
A tensor is a multidimensional array of numbers that can be used to store data, encode a computational relation and represent quantum entanglement. In this sense a tensor can be viewed as valuable resource whose transformation can lead to an understanding of structure in data, computational complexity and quantum information. 

\sloppy
In order to facilitate the understanding of this resource, we propose a family of information-theoretically constructed preorders on tensors, which can be used to compare tensors with each other and to assess the existence of transformations between them. The construction places copies of a given tensor at the edges of a hypergraph and allows transformations at the vertices. A preorder is then induced by the transformations possible in a given growing sequence of hypergraphs. The new family of preorders generalises the asymptotic restriction preorder which Strassen defined in order to study the computational complexity of matrix multiplication.

\fussy

 We derive general properties of the preorders and their associated asymptotic notions of tensor rank and view recent results on tensor rank non-additivity, tensor networks and algebraic complexity in this unifying frame. We hope that this work will provide a useful vantage point for exploring tensors in applied mathematics, physics and computer science, but also from a purely mathematical point of view.
\end{abstract}


\section*{Introduction}
This work is set on the backdrop of two research topics, each of which has developed the theory of tensors in an information-theoretic limit
$$t \mapsto t \otimes t \otimes \cdots \otimes t,$$
but with a distinct flavour what concerns the notion of tensor product. The first research topic is Strassen's tensor analysis, which he developed in order to treat the complexity of matrix multiplication \cite{burgisser2013algebraic}. Here, the tensor product that governs the asymptotics is the Kronecker product and a crucial quantity is the asymptotic tensor rank. The topic closely connects to recent research on combinatorial problems with a recursive structure, such as the cap set problem, in which the asymptotic subrank is a key tensor parameter \cite{ellenberg}. The second research topic is the study of tensor networks for the description of quantum many-body-physics or the classical simulation of quantum algorithms. Here, the tensor product stands for a partial contraction governed by a lattice, graph or hypergraph \cite{christandl2023resource, cirac2021matrix}. The extreme case, where no contraction is carried out, has a nontrivial tensor parameter associated which measures the asymptotic non-multiplicativity \cite{christandl2018tensor, christandl2019border}.

Whereas in the first topic, following Strassen's association of the matrix multiplication exponent $\omega$ to the 2-by-2 matrix multiplication tensor, the association of asymptotic tensor parameters to the tensor themselves is an important aspect, the situation is markedly different in the second topic. Here, the large objects that have been constructed with tensor product and partial contractions have a physical or computational life of their own as quantum many-body state or computational circuits.

In this work, we change this viewpoint and regard the large objects that can be constructed from a given tensor merely as a lense through which to view the original tensor. As a result we obtain a family of asymptotically defined preorders on tensors as well as associated limiting notions of rank and subrank. Strassen's tensor analysis then becomes a natural special case of our new viewpoint. 

The paper is structured as follows. In the Tensor section, we introduce tensors, the preorders of restriction and degeneration, and remind of polynomial interpolation in this context, which relates the two. In Hypegraph Restriction section, we associate tensors to hypergraphs and introduce the three paradigmatic examples (disjoint, lattice, Strassen), which we focus on. In Asymptotic Hypergraph Restriction section, to each growing sequence of hypergraphs we associate a preorder. We show that asymptotic restriction is a special case and generalise known constructions and obstructions from this case to the more general setting. We conclude with an outlook and open questions in the Conclusion section.

\section*{Tensors} \label{sec:restriction}
Let $t\in \CC^{d_1} \otimes \CC^{d_2} \otimes \cdots \otimes \CC^{d_k}$ be a \emph{tensor} of order $k$ (or $k$-tensor) with \emph{local dimensions} $\{d_j\}_{j=1}^k$. For concreteness we choose the complex numbers as the underlying field, but most statements below remain true for more general fields. Note that we regard the vector spaces as mere vector spaces, i.e.\ without an inner product, as may have been expected in the context of quantum theory, where $k$-tensors are quantum states of $k$ particles (we therefore sometimes use the term \emph{state} instead of tensor, when historically more appropriate). Choosing basis $\{e_{i}^{(j)}\}_{i=1}^{d_j}$ for the $j$'th tensor factor, $t$ can be expressed as 
$$t=\sum_{i_j=1}^{d_j} t_{i_1i_2 \cdots i_k} e^{(1)}_{i_1} \otimes e^{(2)}_{i_2} \otimes \cdots \otimes e^{(k)}_{i_k}.$$
For better readability, we will mostly drop the superscript and write $e_{i_j}$ instead of $e^{(j)}_{i_j}$. Note that $2$-tensors are matrices and that $k$-tensor theory therefore generalizes matrix theory. 

The most basic notion for comparing tensors is that of restriction \cite{burgisser2013algebraic}. 
Given $k$-tensors $t\in \CC^{d_1} \otimes \CC^{d_2} \otimes \cdots \otimes \CC^{d_k}$ and $t'\in \CC^{d'_1} \otimes \CC^{d'_2} \otimes \cdots \otimes \CC^{d'_k}$, we say that $t$ \emph{restricts} to $t'$, and write $t\geq t'$, if there are linear maps $m_j: \CC^{d_j} \rightarrow \CC^{d'_j}$ s.th.
$$t'=(m_1\otimes m_2 \otimes \cdots \otimes m_k) t.$$
Note that we do not require the maps to be invertible (not even when $d_j=d_j'$). 
We say that $t$ and $t'$ are \emph{equivalent}, and write $t \sim t'$, if $t\geq t'$ and $t'\geq t$ and emphasise that this does not necessitate that $t$ and $t'$ are defined with respect to spaces of the same dimension. Indeed, embedding a tensor into a higher dimensional space by padding with zeros will result in an equivalent tensor. It is therefore natural to regard the equivalence class of $t$ under $\sim$, rather than $t$ itself, as a 'tensor', but we will not put much weight on this distinction. 

In summary, restriction is a preorder on the set of $k$-tensors which allows to compare any two $k$-tensors. When viewing tensors as a resource, restriction gives us the tool to transform a tensor $t$ into a tensor $t'$. In quantum information, this way of transforming is equivalent to the notion of transformation under stochastic local operation assisted by classical communication (SLOCC) \cite{bennett-slocc,dur2000three}. When in fixed dimension ($d_j=d'_j$) and when the focus is on the classification of entanglement under SLOCC into SLOCC entanglement classes, one may without loss of generality require the linear maps to be invertible. 

\begin{example} The restriction 
$$e_0 \otimes e_0 \otimes e_1+e_0 \otimes e_1 \otimes e_0+ e_1 \otimes e_0 \otimes e_0\geq e_0 \otimes e_0 \otimes e_0+e_1 \otimes e_1 \otimes e_0$$
of tensors in $\CC^2\otimes \CC^2 \otimes \CC^2$ is obtained by choosing $m_3=e_0 e_0^*$, $m_2=\id$ and $m_1=e_1 e_0^*+ e_0 e_1^*$, where $\{e_i^*\}$ denotes the dual basis. The tensor on the left hand side (LHS) is known as the $\W$-state and the one on the right hand side (RHS) is the $2$-by-$2$ unit matrix viewed as a $3$-tensor, i.e. $(e_0 \otimes e_0 +e_1 \otimes e_1) \otimes e_0$. In quantum information $e_0 \otimes e_0 +e_1 \otimes e_1$ is also known as an Einstein-Podolsky-Rosen (EPR) pair with two levels denoted by $\EPR_2$. We will use the following graphical illustration
\begin{center}
\begin{overpic}[height=1cm,grid=false]{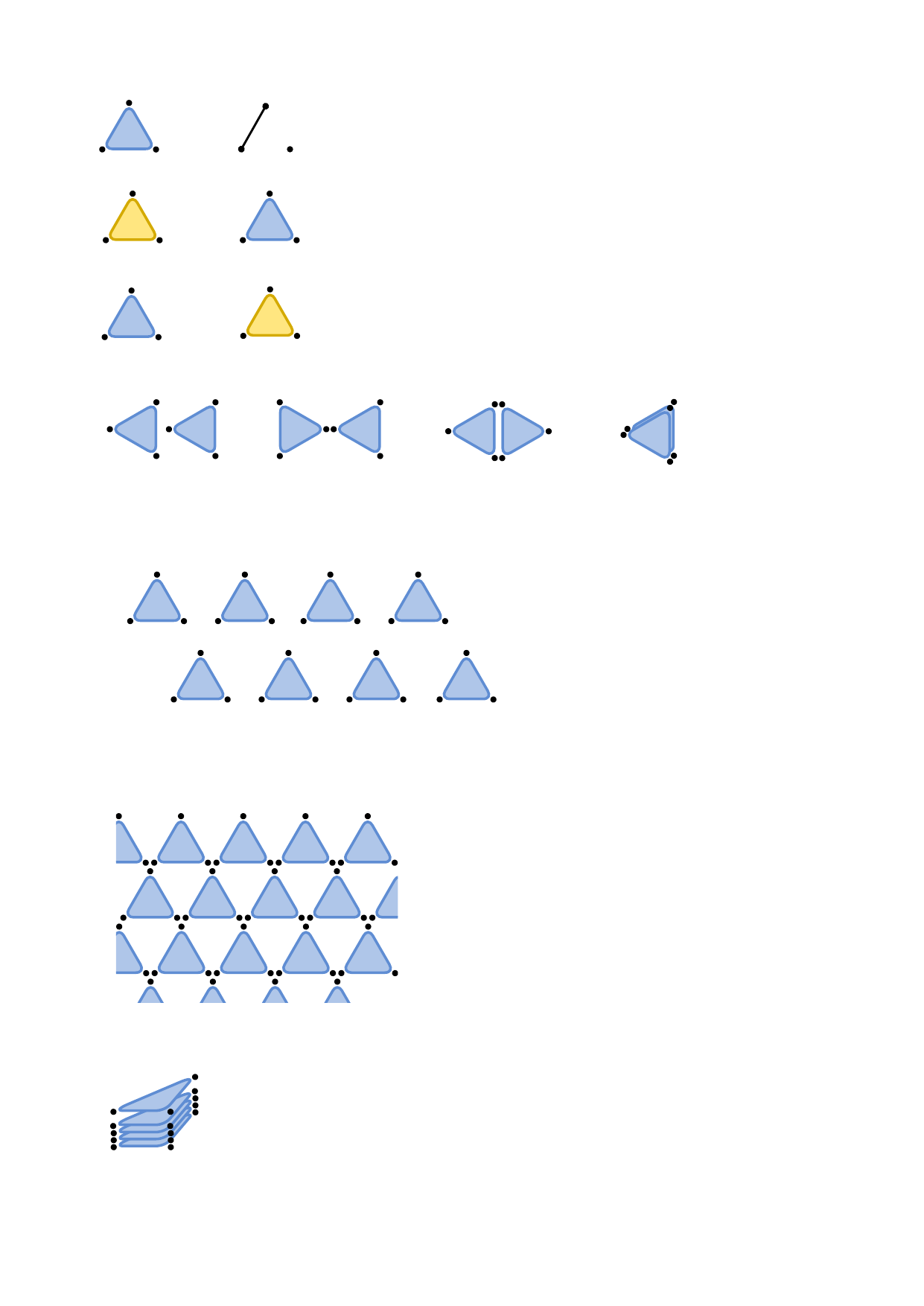}
\put(45,11){\large{$\geq$}}
\put(10,7){$W$} \put(68,13){$2$}\put(120,3){$,$}
\end{overpic}
\end{center}
where each dot represents one of the three particles with associated vector space (here $\CC^2$). The blue triangle represents the $\W$-state and touches all three particles indicating what in quantum information language is known as \emph{genuine multiparticle entanglement}. The edge on the RHS represents the $\EPR$-state ($\sum_i e_i\otimes e_i$) with the number $2$ indicating the number of levels, here $\EPR_2$. The fact that a third particle is present but not touching the rest of the illustration indicates the factorised state $\EPR_2 \otimes e_0$. 

Now let $\eps$ be a non-zero complex number (or a formal variable). We find
   \begin{align*}
       e_0 \otimes e_0 \otimes e_0+& e_1 \otimes  e_1 \otimes e_1 \sim (e_0+\eps e_1)\otimes (e_0+\eps e_1)\otimes (e_0+\eps e_1)-e_0 \otimes e_0 \otimes e_0 \\
       &= \eps (e_0 \otimes e_0 \otimes e_1+e_0 \otimes e_1 \otimes e_0+ e_1 \otimes e_0 \otimes e_0)+O(\eps^2)
       \end{align*} 
       where in the first line we used the invertible matrices $m_1=m_2=m_3=(e_0+\eps e_1)e_0^*+(-e_0)e_1^*$ and in the second line we expanded in powers of $\eps$.
   The initial tensor is known as the unit tensor of size two, often denoted by $\langle 2\rangle$, or, in quantum information, as the Greenberger-Horne-Zeilinger ($\GHZ$) state with two levels. We recognise the final tensor as the $\W$-state. Dividing $m_1$ by $\eps$, we see that 
   $\GHZ_2 \geq \W+O(\eps),$
   i.e.\ the conversion is possible to arbitrary precision, even though it is not exactly possible: $\GHZ_2 \not\geq W$. 
\begin{center}
\begin{overpic}[height=1cm,grid=false]{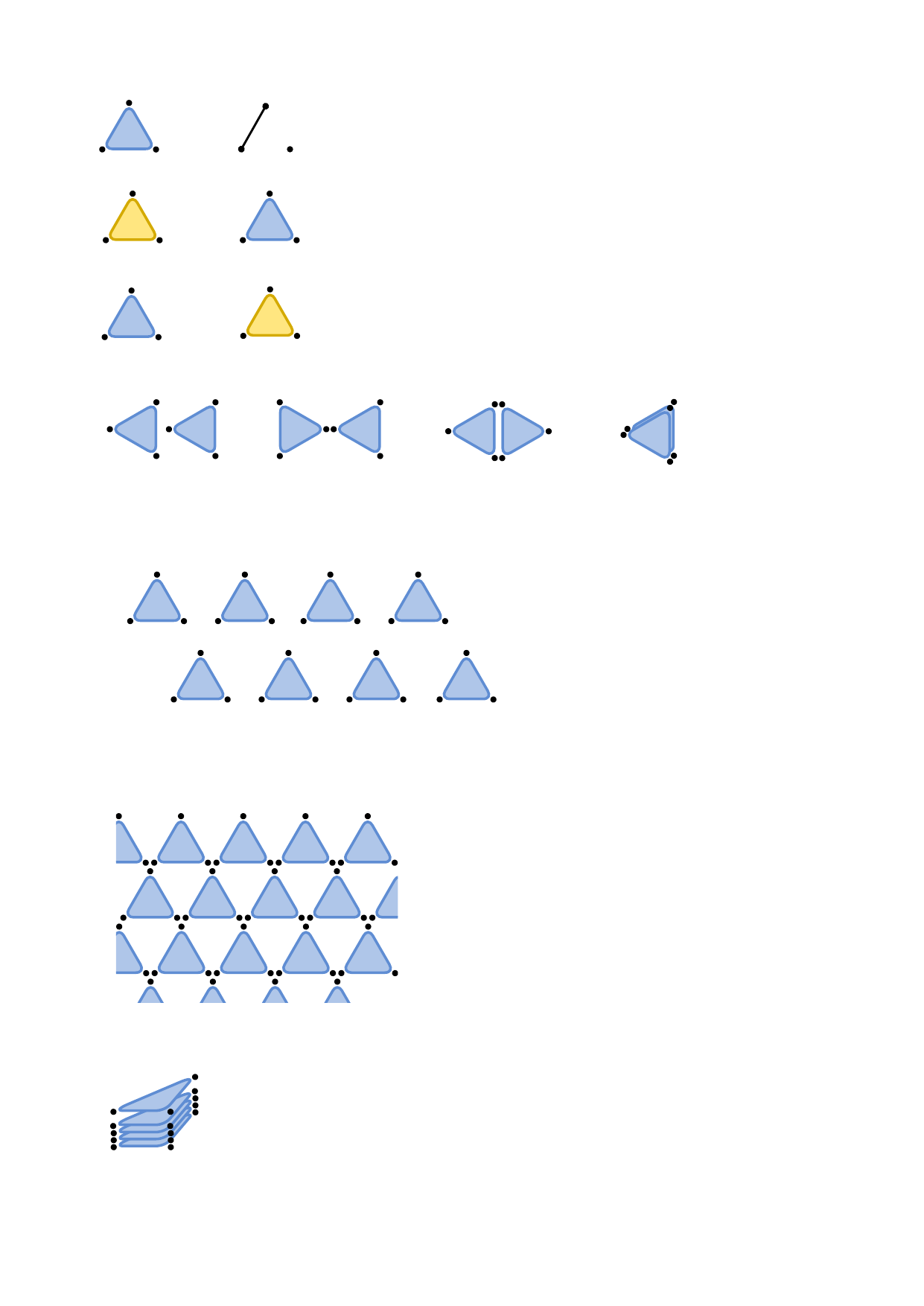}
\put(45,11){\large{$\not\geq$}}
\put(12,7.5){$2$} \put(79,7.5){$W$}
\end{overpic}
\end{center}
This latter statement is relatively easy to prove, yet the first non-trivial application of the lower bound method known as substitution method \cite{burgisser2013algebraic}. 
\end{example}
The previous example shows that the restriction preorder is not closed, therefore calling for the introduction of its closure, in order to make tools from algebraic geometry available.
We therefore say that $t$ \emph{degenerates} to $t'$, and write $t\geqdeg t'$, if, for all $\eps \neq 0$ there are tensors $t_\eps'$ s.th. $t\geq t_\eps'$ and
   $t_\eps' \mapsto^{\eps\mapsto 0} t'$. It turns out that we may equivalently demand that there are maps $\{m_j\}_{j=1}^k$ with entries polynomial in $\eps$, s.th. 
   $t'_\eps=\eps^d t' + \eps^{d+1} t_1+\cdots + \eps^{d+e} t_{e}$, for some $t_j$, furnishing $\geqdeg$ to $\geqdeg^e, \geqdeg_d$ and $\geqdeg_d^e$ as needed \cite[Chapter 15.4]{burgisser2013algebraic}. Note that it is the latter definition that generalises to fields other than $\CC$.

Since degeneration is closed it gives rise to algebraic varieties and non-degeneration $t\not\geqdeg t' $ can therefore be certified with help of polynomial covariants that vanish on $t$, but not on $t'$. More precisely, consider $t, t'$ in the same tensor space (else, enlarge suitably) and note that $t\geqdeg t'$ is equivalent to $\overline{\GL.t}  \supseteq \overline{\GL.t'}$, where $\GL.t$ denotes the orbit $\{(m_1\otimes m_2\otimes \cdots \otimes m_k) t: \forall j, m_j \in \GL_{d_j}(\CC)\}$ and overline denotes the topological closure, which over $\CC$ coincides with the Zariski closure (see e.g. \cite{christandl2019border}). Since $\overline{\GL.t} $ is a $\GL$-invariant algebraic variety, it can be presented as the common zero-set of a finite set of $\GL$-covariants (see e.g. \cite[Section 3 in Supplementary Information]{walter2013entanglement}).  $t' \notin \overline{\GL.t} $ is therefore equivalent to the existence of a  $\GL$-covariant that vanishes on $t$, but not on $t'$. As $\overline{\GL.t} $ is $\GL$-invariant,  $t' \notin \overline{\GL.t} $ is furthermore equivalent to  $\overline{\GL.t'} \not\subseteq \overline{\GL.t} $

\begin{example}
    As an example consider $\CC^2 \otimes \CC^2 \otimes \CC^2$ and note that the statement $\W\geqdeg\GHZ_2$ is equivalent to $\overline{\GL.\GHZ_2}  \supseteq \overline{\GL.\W}$, where $\GL.t$ then equals $\{m_1\otimes m_2\otimes m_3 t: m_i \in \GL_2(\CC)\}$.  Cayley's second hyperdeterminant 
    \begin{align*}
        \mathrm{Det}(t):=t_{000}^2t_{111}^2 &+ t_{001}^2t_{110}^2 + t_{010}^2t_{101}^2 + t_{100}^2t_{011}^2  \\
        & - 2t_{000}t_{001}t_{110}t_{111} - 2t_{000}t_{010}t_{101}t_{111} 
         - 2t_{000}t_{011}t_{100}t_{111}\\ 
         &  - 2t_{001}t_{010}t_{101}t_{110} -2t_{001}t_{011}t_{110}t_{100}
          -2t_{010}t_{011}t_{101}t_{100} \\ 
          & + 4t_{000}t_{011}t_{101}t_{110} + 4t_{001}t_{010}t_{100}t_{111}.
        \end{align*}
    is a polynomial that changes multiplicatively by $(\det m_1)^2(\det m_2)^2(\det m_3)^2$ under restriction and thus stays either zero or non-zero on orbits. In particular, if it is zero, it will remain so on its closure. 
Plugging in the coordinates ($t_{001}=t_{010}=t_{100}=1$ and otherwise zero) for $W$ in the above formula, we see that  $\mathrm{Det}(\W)=0$ and $\mathrm{Det}$ thus vanishes identically on $\overline{\GL.W}$. If $\overline{\GL.W}$ would contain $\overline{\GL.\GHZ_2}$, it would also have to vanish on $\GHZ_2$, but this is not the case, as $\mathrm{Det}(\GHZ_2)=1$ (here $t_{000}=t_{111}=1$ and otherwise zero). Cayley's second hyperdeterminant therefore witnesses that $\W$ does not degenerate to $\GHZ_2$, $\W \not\geqdeg \GHZ_2$. Thus, while
\begin{center}
\begin{overpic}[height=1cm,grid=false]{GHZgeqW.pdf}
\put(45,11){\large{$\geqdeg$}}
\put(12,7.5){$2$} \put(79,7.5){$W$}
\end{overpic}
\end{center}
we have
\begin{center}
\begin{overpic}[height=1cm,grid=false]{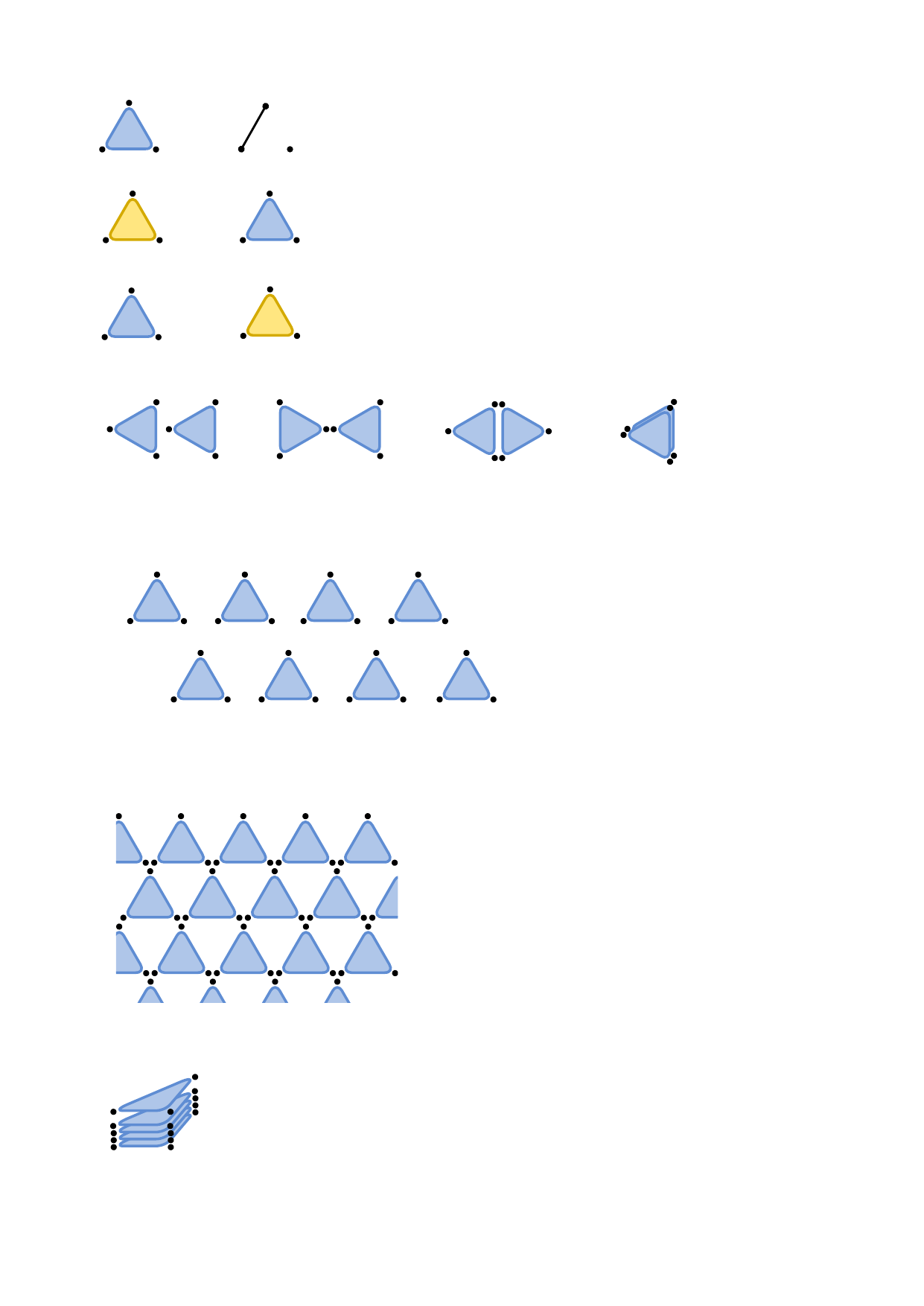}
\put(45,11){\large{$\not\geqdeg$}}
\put(10,7.5){$W$} \put(81,7.5){$2$}
\put(120,3){$.$}
\end{overpic}
\end{center}
\end{example}

The $r$-level GHZ-state on $k$ factors
$$\GHZ_r^{(k)}:=\sum_{i=1}^r e_i^{(1)} \otimes e_i^{(2)} \otimes \cdots \otimes e_i^{(k)}$$ plays a special role in the theory of tensors and is also known as the unit tensor of size $r$ and denoted by $\langle r \rangle$. It may be viewed as a generalisation of the unit matrix to the tensor world, and is canonically obtained from the simple tensor (or product state) $e_1^{(1)} \otimes e_1^{(2)} \otimes \cdots \otimes e_1^{(k)}$ by applying the direct sum operation 
\begin{align*}
    t\oplus t' \in & \CC^{d_1} \otimes \CC^{d_2} \otimes \cdots \otimes \CC^{d_k} \oplus \CC^{d'_1} \otimes \CC^{d'_2} \otimes \cdots \otimes \CC^{d'_k} \\
    & \subseteq (\CC^{d_1} \oplus \CC^{d'_1})  \otimes (\CC^{d_1} \oplus \CC^{d'_1})  \otimes \cdots \otimes (\CC^{d_1} \oplus \CC^{d'_1})  
\end{align*}
where the inclusion is by padding with zeros. It is then clear that 
$$\GHZ_r^{(k)}\sim \bigoplus_{i=1}^r e_1^{(1)} \otimes e_1^{(2)} \otimes \cdots \otimes e_1^{(k)},$$
where $e_1^{(j)}$ spans $j$'th factor $\CC$ and thus 
$e_1^{(1)} \otimes e_1^{(2)} \otimes \cdots \otimes e_1^{(k)}$ spans
$\CC \otimes \CC \otimes \cdots \otimes \CC$. The $i$-direct sum component then corresponds to
$e_i^{(1)} \otimes e_i^{(2)} \otimes \cdots \otimes e_i^{(k)}$ of the GHZ-state defined above.

When viewing tensors as a resource, the GHZ-state is the natural `currency'. The `cost' of a tensor is then the size of a unit tensor required to obtain $t$ (now keeping $k$ implicit), 
$$R(t):=\min \{r:\GHZ_r \geq t\}$$
a quantity identical to the well-known \emph{tensor rank} of $t$.
Likewise, the `value' of $t$, the largest GHZ-state we may obtain from $t$,
$$Q(t):=\max \{r:t \geq \GHZ_r \},$$ 
is known as the \emph{subrank} of $t$. Whereas clearly $R(\GHZ_2)=Q(\GHZ_2)=2$, the example above implies $R(W)=3$ and $Q(W)=1$, exhibiting irreversibility in tensor transformations.

Viewing physical, computational and mathematical objects as resources and studying their free transformation with associated costs and values is common and often implicit to a subject. Considering an explicit \emph{resource theory} is well-known from thermodynamics and a focal point of entanglement theory \cite{horodecki-review}. In the present context a tensor resource theory was considered as SLOCC entanglement theory \cite{dur2000three} and then connected to algebraic complexity 
\cite{chitambar2008}. We summarise the above in the following resource theory for $k$-tensors. 
\begin{resource}[$\geq$]
The resource theory of tensors under restriction is given by:
    \begin{itemize}
    \item (resource) $t$ a $k$-tensor
    \item (transformation) restriction $\geq$
    \item (unit) unit tensor or GHZ-state $\GHZ_r$
    \item (cost) tensor rank $R(t):=\min \{r:\GHZ_r \geq t\}$
    \item (value) subrank $Q(t):=\max \{r:t \geq \GHZ_r \}$
\end{itemize}
\end{resource}
Alternatively, one can consider a resource theory of degeneration, where, in place of the preorder restriction, $\geq$ one uses the preorder of degeneration $\geqdeg$. The corresponding cost and value in this resource theory are known as \emph{border rank} $\underline{R}$ and \emph{border subrank} $\underline{Q}$. Since $\geq$ implies $\geqdeg$, degeneration is weaker than restriction with the implied relations
$$R(t) \geq \underline{R}(t) \geq \underline{Q}(t) \geq Q(t).$$ 
Note that this chain of inequalities collapses to the usual rank for the matrix case $k=2$. Generally, $\geqdeg$ is not much weaker than $\geq$ as can be seen in the following useful lemma, which is proved with help of polynomial interpolation.
\begin{lemma}[\cite{bini1980approximate}]\label{lemma:interpolation}
    For $t, t'$ $k$-tensors with $t \geqdeg^e_d t'$, we have
    $$\bigoplus_{i=1}^{e+1} t \geq t'.$$ The statement is also true with $e+1$ replaced by $\binom{k-1+d}{k-1}$.
\end{lemma}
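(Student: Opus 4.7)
Unpacking the hypothesis $t \geqdeg^e_d t'$: there exist linear maps $m_1(\eps), \ldots, m_k(\eps)$ with entries polynomial in $\eps$ such that
\begin{equation*}
(m_1(\eps) \otimes \cdots \otimes m_k(\eps))\, t = \eps^d t' + \eps^{d+1} t_1 + \cdots + \eps^{d+e} t_e
\end{equation*}
for some auxiliary tensors $t_1, \ldots, t_e$. The plan for both bounds is identical in spirit: write $t'$ as a sum of $N$ restrictions of $t$ and then combine these into a single restriction $\bigoplus_{i=1}^N t \geq t'$ by arranging the restriction maps in block form, diagonally on tensor factors $2, \ldots, k$ and as a row (which realises the summation) on factor $1$.

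For the bound $N = e+1$ I would use polynomial interpolation. Choose $e+1$ distinct nonzero scalars $\eps_1, \ldots, \eps_{e+1}$ and substitute each into the identity above: every $(m_1(\eps_i) \otimes \cdots \otimes m_k(\eps_i))\, t$ is then a restriction of $t$ equal to $\sum_{j=0}^{e} \eps_i^{\,d+j}\, t_j$, with the convention $t_0 := t'$. Since the $\eps_i$ are distinct, the associated Vandermonde matrix is invertible, so there exist scalars $\lambda_1, \ldots, \lambda_{e+1}$ satisfying $\sum_i \lambda_i \eps_i^{\,d+j} = \delta_{j,0}$ for $j = 0, \ldots, e$. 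Absorbing $\lambda_i$ into the first factor yields
\begin{equation*}
t' = \sum_{i=1}^{e+1} \bigl(\lambda_i m_1(\eps_i) \otimes m_2(\eps_i) \otimes \cdots \otimes m_k(\eps_i)\bigr)\, t,
\end{equation*}
which is precisely the output of the block restriction from $\bigoplus_{i=1}^{e+1} t$.

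For the bound $N = \binom{k-1+d}{k-1}$ I would instead expand each map multihomogeneously: write $m_j(\eps) = \sum_{a \geq 0} \eps^{a} m_{j,a}$ with $m_{j,a}$ constant matrices. Substituting into the defining identity and reading off the coefficient of $\eps^d$ on both sides gives
\begin{equation*}
t' = \sum_{\substack{a_1,\ldots,a_k \geq 0 \\ a_1 + \cdots + a_k = d}} \bigl(m_{1,a_1} \otimes \cdots \otimes m_{k,a_k}\bigr)\, t.
\end{equation*}
By stars and bars the index set has exactly $\binom{k-1+d}{k-1}$ elements and each summand is manifestly a restriction of $t$, so the same block construction yields the asserted restriction from $\bigoplus_{i=1}^{\binom{k-1+d}{k-1}} t$.

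No step is genuinely difficult: the Vandermonde solvability is routine, the graded expansion is a formal identity, and the translation from ``sum of $N$ restrictions of $t$'' to ``restriction from $\bigoplus_{i=1}^N t$'' only requires a row-combiner on one factor. The conceptual takeaway is simply that the $e$ unwanted error terms can be killed either by dual interpolation in the parameter $\eps$ (giving $e+1$ copies) or by a graded decomposition of the maps themselves (giving $\binom{k-1+d}{k-1}$ copies), and one may use whichever count is smaller in a given application.
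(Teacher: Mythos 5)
Your two arguments --- interpolation at $e+1$ distinct nonzero points via an invertible (scaled) Vandermonde matrix, and extraction of the degree-$d$ component of the multihomogeneous expansion of the $m_j(\eps)$ counted by stars and bars --- are correct and are exactly the standard proofs in the references the paper cites for this lemma (the paper itself states it without proof). One small slip in the packaging step: to realise $t'=\sum_{i=1}^{N}\bigl(m_1^{(i)}\otimes\cdots\otimes m_k^{(i)}\bigr)t$ as a restriction of $\bigoplus_{i=1}^{N}t$ you must apply the summing map $\sum_i m_j^{(i)}\pi_j^{(i)}$ on \emph{every} tensor factor, not only on factor $1$ --- with block-diagonal maps on factors $2,\dots,k$ the $N$ summands land in distinct blocks and never add up --- and this works without cross terms precisely because $\bigoplus_i t$ is supported on the diagonal blocks.
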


The first consequence of Lemma \ref{lemma:interpolation} is $$R(t) \leq (e+1)\underline{R}(t)$$ for $t \geqdeg^e t'$, with the full potential of the lemma becoming clear after the next section.

The study of every resource theory requires constructions giving explicit transformations and obstructions thereof. A complete understanding beyond the simplest cases is often unattainable. This is no different in the resource theory of tensors for $\geq$ or $\geqdeg$, where only the matrix case ($k=2$), the matrix pencil case ($3$-tensors where $d_1=2$) and a few small cases (e.g.\ $d_1=d_2=d_3=3$, $d_1=d_2=d_3=d_4=2$) are well-understood \cite{Lan:TensorBook}. The problem is in general NP-hard as the computation of tensor rank is \cite{tensor-rank-np}.

It is therefore remarkable that a significant treatment of larger structured tensors has been obtained in different contexts ranging from algebraic complexity to quantum many-body physics. In the following section we will explain how to build large structured tensors by placing smaller tensors on edges of a hypergraph. 

\section*{Hypergraph Restriction} \label{sec:hypergraph}
Whereas general tensors of large order and dimensions are unwieldy objects, there are ways of constructing powerful structured tensors from smaller ones. Key elements in the constructions are the tensor product operation, grouping (or flattening) of vector spaces and the partial contraction with tensors of smaller order. In the context of quantum many-body physics and quantum computation this leads to matrix product states (MPS) and projected entangled pair states (PEPS), higher order tensors of small local dimensions \cite{orus2019tensor, cirac2021matrix}. As the names suggest the focus is here on combining pairs, i.e.\ matrices or $2$-tensors, but recently, more general structures have been considered by combining smaller tensors to larger entanglement structures \cite{christandl2020tensor, molnar2018generalization, xie2014tensor, christandl2023resource}. MPS are also known as tensor trains (TT) \cite{oseledets} and have applications as numerical mathematical tool in a range of scientific disciplines from engineering and data analysis to the life sciences \cite{kolda}. 

The contractions with smaller tensors considered in this context is a special case of the notion of restriction and therefore naturally included in our framework, and consequently omitted as a building principle for structured tensors. In the context of algebraic complexity theory \cite{burgisser2013algebraic}, the tensor product is usually used to obtain tensors of the same low order $k$, but in higher local dimensions. The purpose of this section is to explain a single framework that exhibits both tensor networks and algebraic complexity as important special cases (\cite{christandl2023resource} building on \cite{christandl2019tensor, vrana2017entanglement, christandl2020tensor}). We start with a basic example displaying different ways of building larger structured tensors.

\begin{example} \label{ex:tensorproduct}
    Let $t\in \CC^{d_1}\otimes \CC^{d_2}\otimes \CC^{d_3} $ and $t'\in \CC^{d_1'}\otimes \CC^{d_2'}\otimes \CC^{d_3'} $ be $3$-tensors. Then 
    $t \otimes t' \in \CC^{d_1}\otimes \CC^{d_2}\otimes \CC^{d_3} \otimes\CC^{d_1'}\otimes \CC^{d_2'}\otimes \CC^{d_3'} $ is naturally a $6$-tensor, but by grouping tensor factors, may also be regarded as a $3$-tensor
    $t \otimes t' \in ( \CC^{d_1}\otimes\CC^{d_1'}) \otimes ( \CC^{d_2} \otimes \CC^{d_2'}) \otimes ( \CC^{d_3} \otimes \CC^{d_3'}) $
     or anything in between, i.e.\ as a 4- or 5-tensor, as shown in the illustration (from left to right as 6-, 5-, 4- and 3-tensor):
\begin{center}
\begin{overpic}[height=1cm,grid=false]{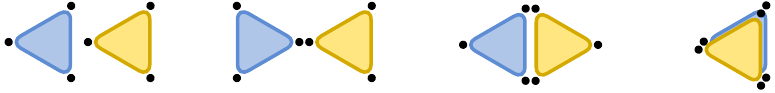}
\put(110,2){.}
\end{overpic}
\end{center}
The properties of the four tensors will in general be different, a first glimpse of which is given by the tensor rank. The smallest example is provided by the W-state, where
$$R(W)^2=9>R(W\otimes W)=8>R(W\boxtimes W)=7 $$
with $R(W\otimes W)$ to be interpreted as the 4-, 5- or 6-tensor \cite{chen2018tensor, yu2010tensor, christandl2018tensor, christandl2023resource}, graphically
\begin{center}
\begin{overpic}[height=1.3cm,grid=false]{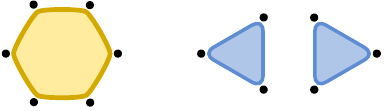}
\put(38,12){$\geq$}
\put(14,12){$8$} 
\put(58,12){$W$}
\put(83,12){$W$}
\put(110,2){.}
\end{overpic}
\end{center}
The sign $\boxtimes$ was used to indicate that the final tensor is a 3-tensor. More generally, it is used to indicate the \emph{Kronecker tensor product} which groups to two $k$-tensors into a $k$-tensor.

When considering the threefold tensor product of $\EPR_d$, again interpreted in the different groupings, we obtain, for instance,
\begin{center}
\begin{overpic}[height=1cm,grid=false]{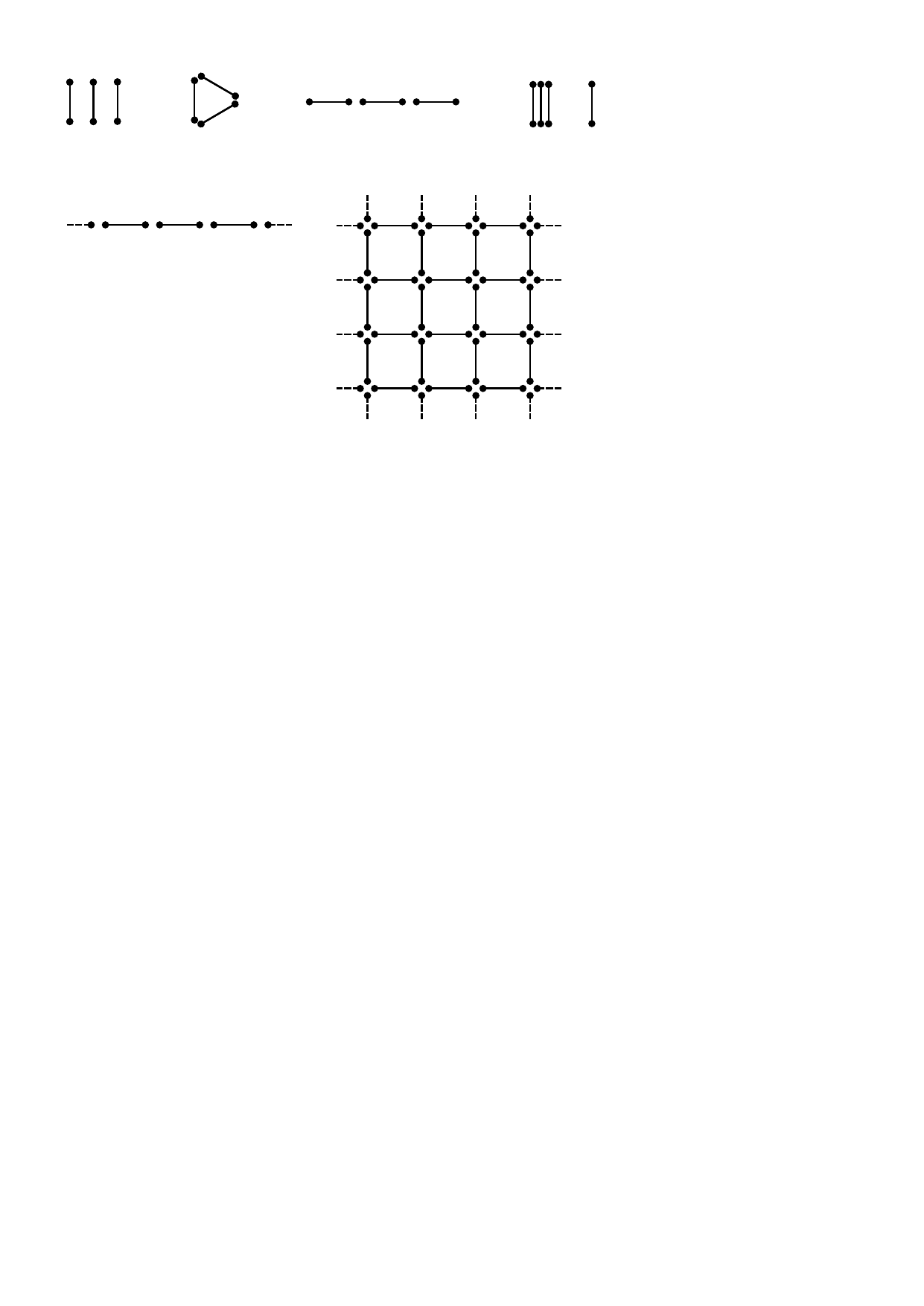}
\put(92,4){\large{$\sim$}}
\put(99,4){$d^3$}
\put(110,2){,}
\end{overpic}
\end{center}
where the tensor associated to the triangle graph (indicated by $\triangle$) is the famous $d$-by-$d$ matrix multiplication tensor $${\EPR_d}_{\triangle}:=\sum_{i_1, i_2, i_3=1}^d (e_{i_1} \otimes e_{i_2}) \otimes (e_{i_2} \otimes e_{i_3}) \otimes (e_{i_3} \otimes e_{i_1}), $$
which is also denoted by $\langle d, d, d\rangle$ or $\mathrm{MaMu}(d)$ \cite{burgisser2013algebraic,christandl2020tensor}.
Using more pairs, we can build arbitrary graph tensors \cite{christandl2019tensor} including the entanglement structures of MPS 
\begin{center}
\begin{overpic}[height=0.3cm,grid=false]{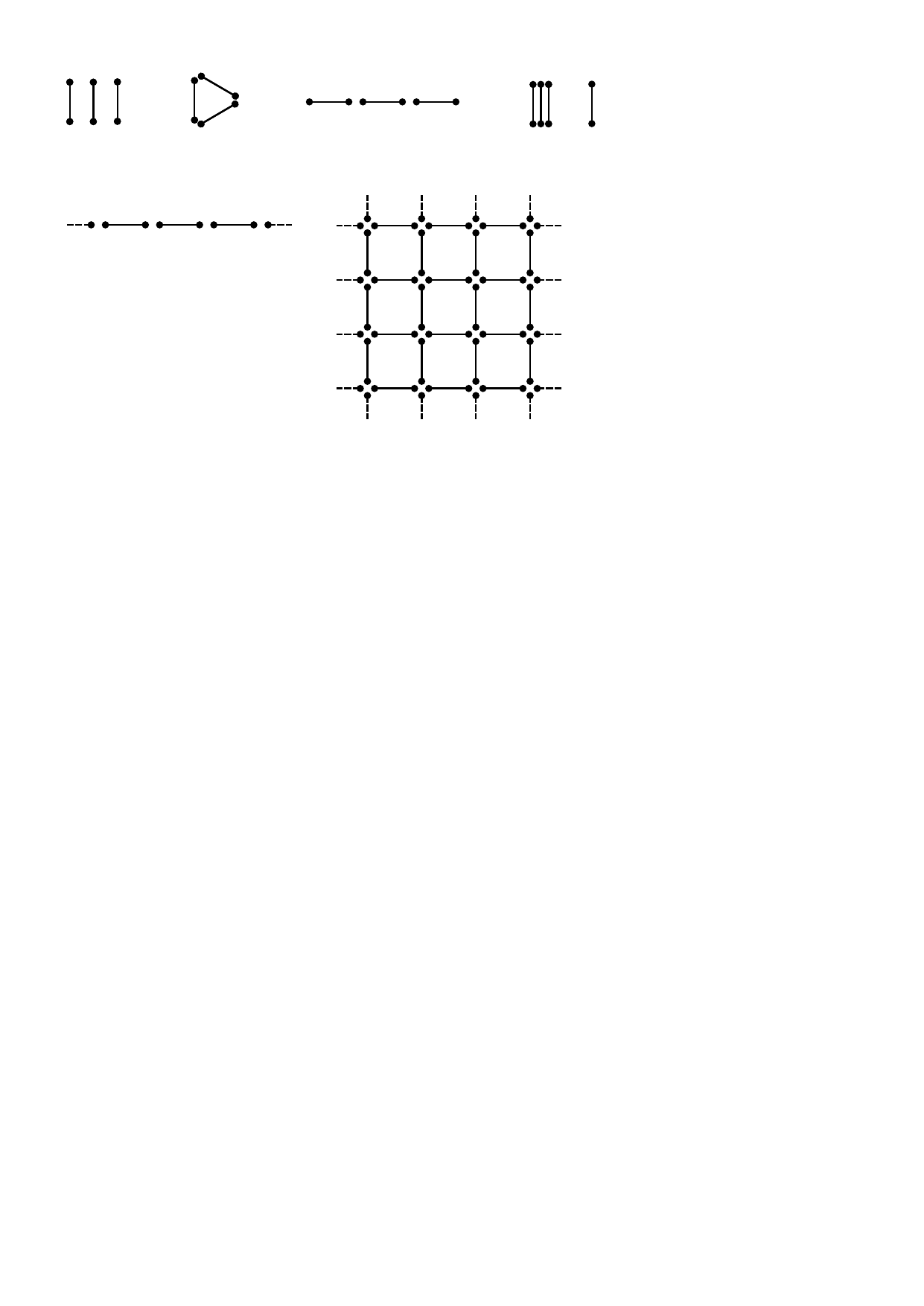}
\end{overpic}
\end{center}
or PEPS \cite{christandl2020tensor}
\begin{center}
\begin{overpic}[height=3cm,grid=false]{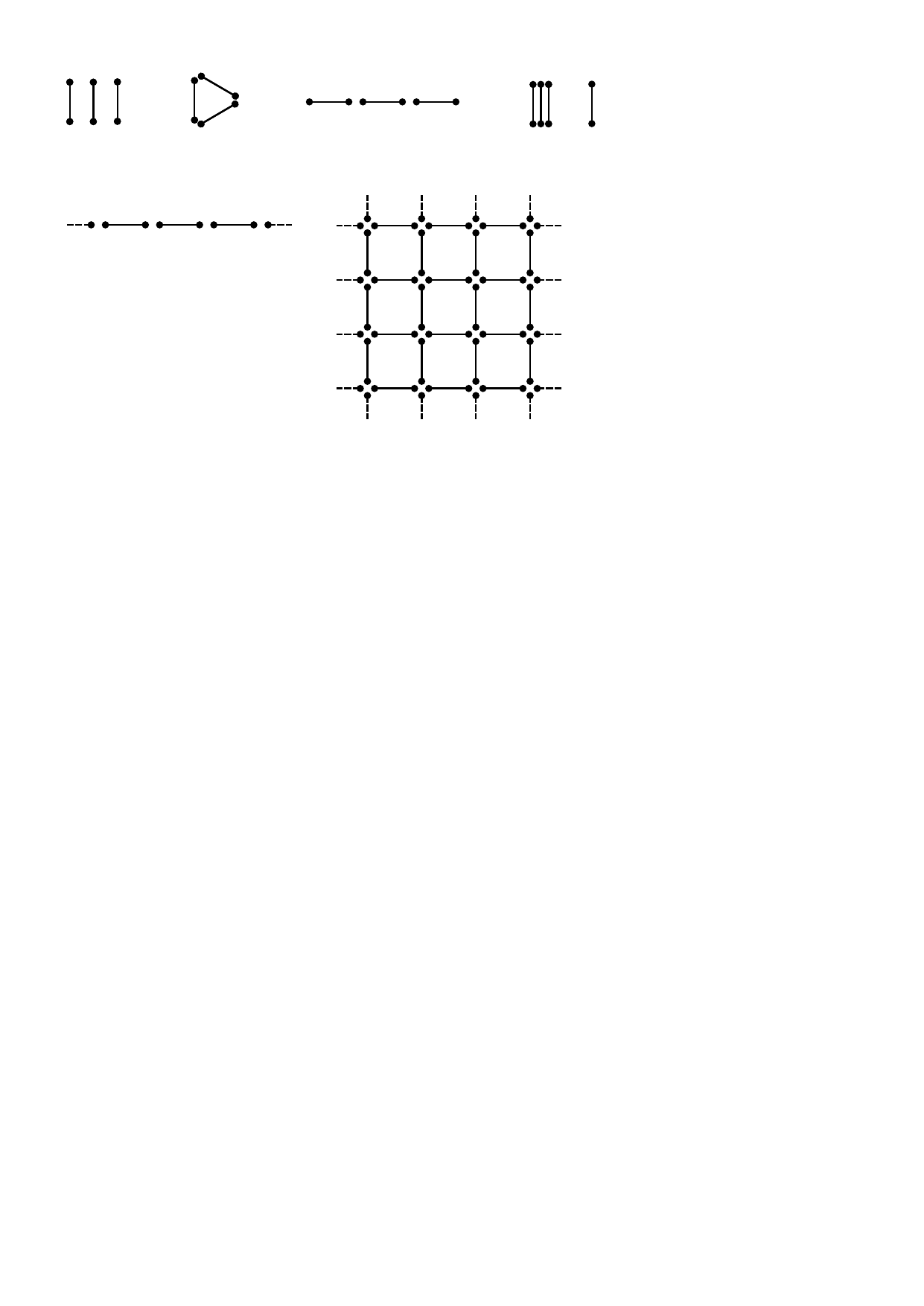}
\put(110,2){.}
\end{overpic}
\end{center}

\end{example}

A useful relation between the tensor product and the direct sum for $k$-tensors is
$$\bigoplus_{i=1}^r t\sim \GHZ_r^{(k)} \boxtimes \ t$$
which implies
$$\GHZ_{rr'}^{(k)}=\GHZ_{r}^{(k)}\boxtimes \GHZ_{r'}^{(k)}.$$
Note that the same relations hold true for EPR-states as  $\GHZ_{r}^{(2)}=\EPR_{r}$.

The common trait in the examples is that we have associated structured tensors to graphs or hypergraphs motivating the following general definition. 
\begin{definition}{\cite[Definition 11]{christandl2020tensor}} \label{def:entanglement-structure}
Let $H=(V, E)$ be a (directed) hypergraph with vertex set $V$ and hyperedge set $E$.\footnote{A hyperedge $e=(v_1, v_2, \ldots, v_k)$ is an ordered set of distinct vertices. We allow hyperedges to be repeated. Sometimes we will indicate the number of hyperedges $n$ of $H$ in subscript, i.e.\ $H_n=(V_n, E_n)$.}
To each hyperedge $e$, let $t_e\in \bigotimes_{v \in e} \CC^{d_v^{(e)}}$ be an associated tensor.
We then define the tensor 
$$t_H:=\bigotimes_{e\in E} t_e \in \bigotimes_{v \in V} \left(\bigotimes_{e: v \in e}\CC^{d_v^{(e)}}\right)$$
where the bracket indicates that the tensor spaces associated the same vertex are combined into one tensor factor. 
We call $t_H$ an \emph{entanglement structure} and note that it is a tensor of order $|V|$.
\end{definition}
 Note that an entanglement structure $t_H$ is derived both from the hypergraph $H=(V, E)$ and the tensors $t_e$ which are associated to the hyperedges $e\in E$. In many cases the $t_e$'s can be reconstructed from $t_H$ and $H$, e.g.\ in the case where each hyperedge appears only once and no permutation of its vertices are present. In the context of GHZ-states on the edges, entanglement structures have also been called (hyper)graph states \cite{christandl2019tensor}. 
 The following illustration shows a tensor associated to a hypergraph with four edges of sizes two, three, three and four on seven vertices. 
\begin{center}
\begin{overpic}[height=2cm,grid=false]{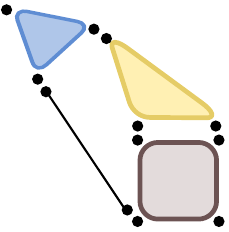}
\end{overpic}
\end{center}  
The different colours are meant to indicate the different tensors. For simplicity of the illustration we decided not to indicate that the hyperedges are directed. 

When the hypergraph is uniform (meaning all hyperedges are of the same size), it might happen that all $t_e$ are the same and equal to $t$. When clear from the context, we might thus sometimes associate $t_H$ directly to a tensor $t$ without mentioning the intermediate map from edges to tensors. 

Since we will later focus on families of hypergraphs we now  introduce three paradigmatic families that exhibit the richness of the subject.
\begin{itemize}
    \item (Disjoint) 
    Given hypergraphs $H$ and $H'$, we denote their disjoint union (graph sum) by $H \oplus H'$. For the $\nu$-fold sum, we write $\nu \cdot H$. Consider now $H$ as a hypergraph on $k$ vertices with a single hyperedge of size $k$. Then we define $H_{\disj_n}:= \nu \cdot H$, i.e.\ the hypergraph on $kn $ vertices with $n$ disjoint edges of size $k$, here illustrated with $k=3$ and $n=8$,
\begin{center}
\begin{overpic}[height=1.5cm,grid=false]{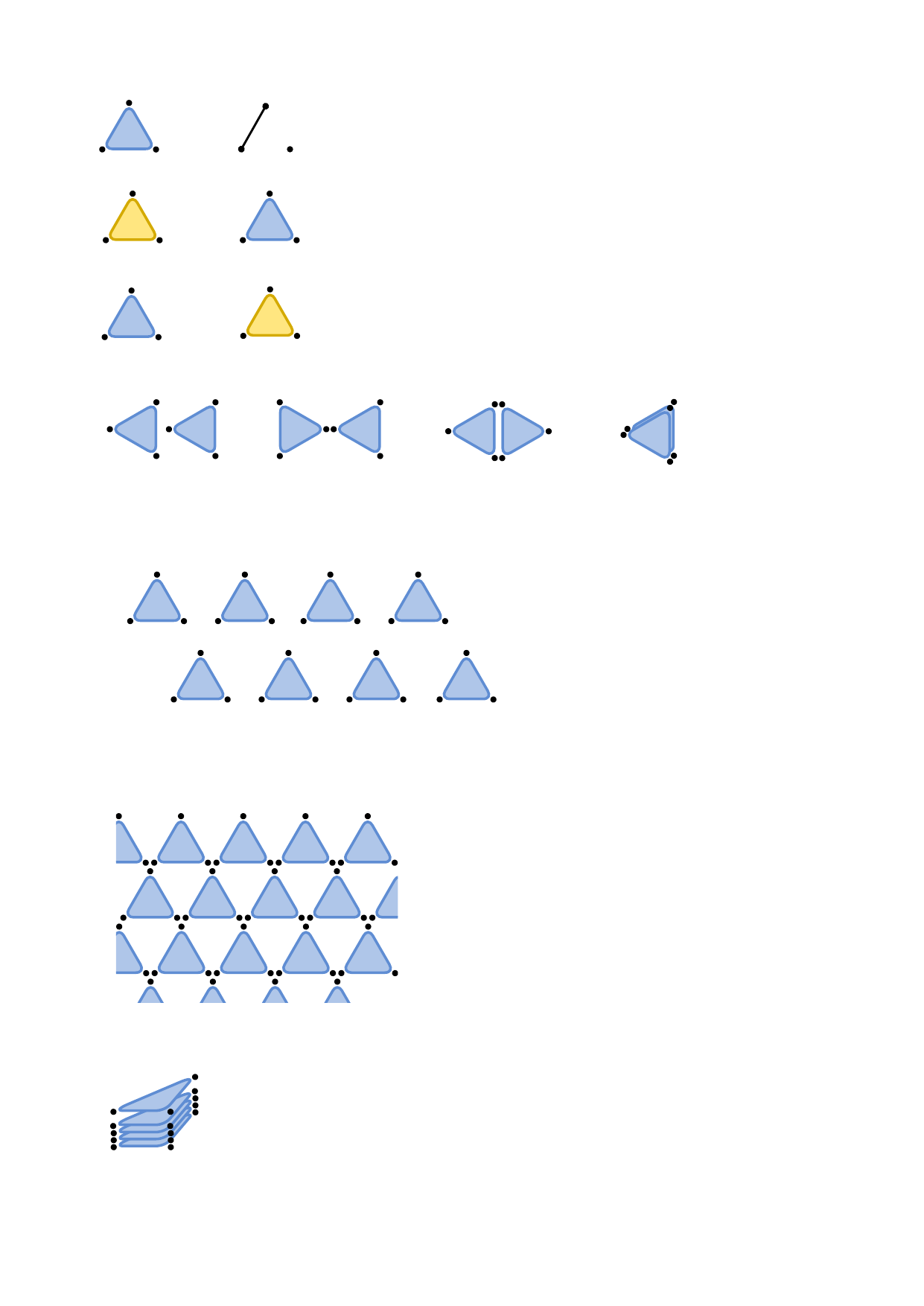}
\put(110,2){.}
\end{overpic}
\end{center}    
\item (Lattice) Consider $H_{\lattice_n}$, a patch of $n$ hyperedges of a (hyper)lattice. We illustrate with the vertices arranged in the  triangular lattice with hyperedges on every second plaquette of the lattice,
\begin{center}
\begin{overpic}[height=2cm,grid=false]{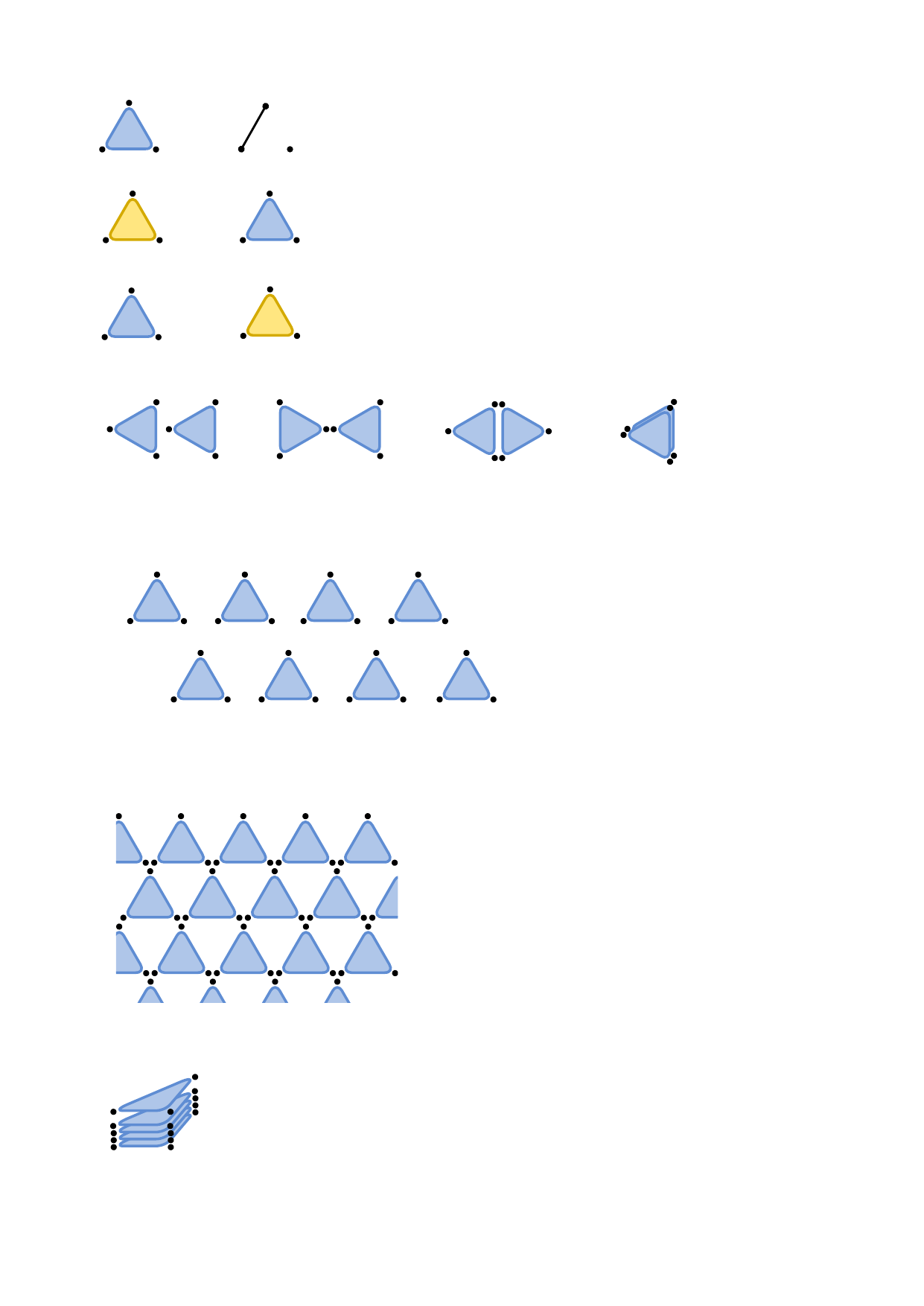}
\put(110,2){.}
\end{overpic}
\end{center}  
    \item (Strassen) Consider $H_{\Str_n}$, the hypergraph on $k$ vertices with $n$ occurrences of the same hyperedge of size $k$, which is relevant to Strassen's asymptotic restriction,
\begin{center}
\begin{overpic}[height=1cm,grid=false]{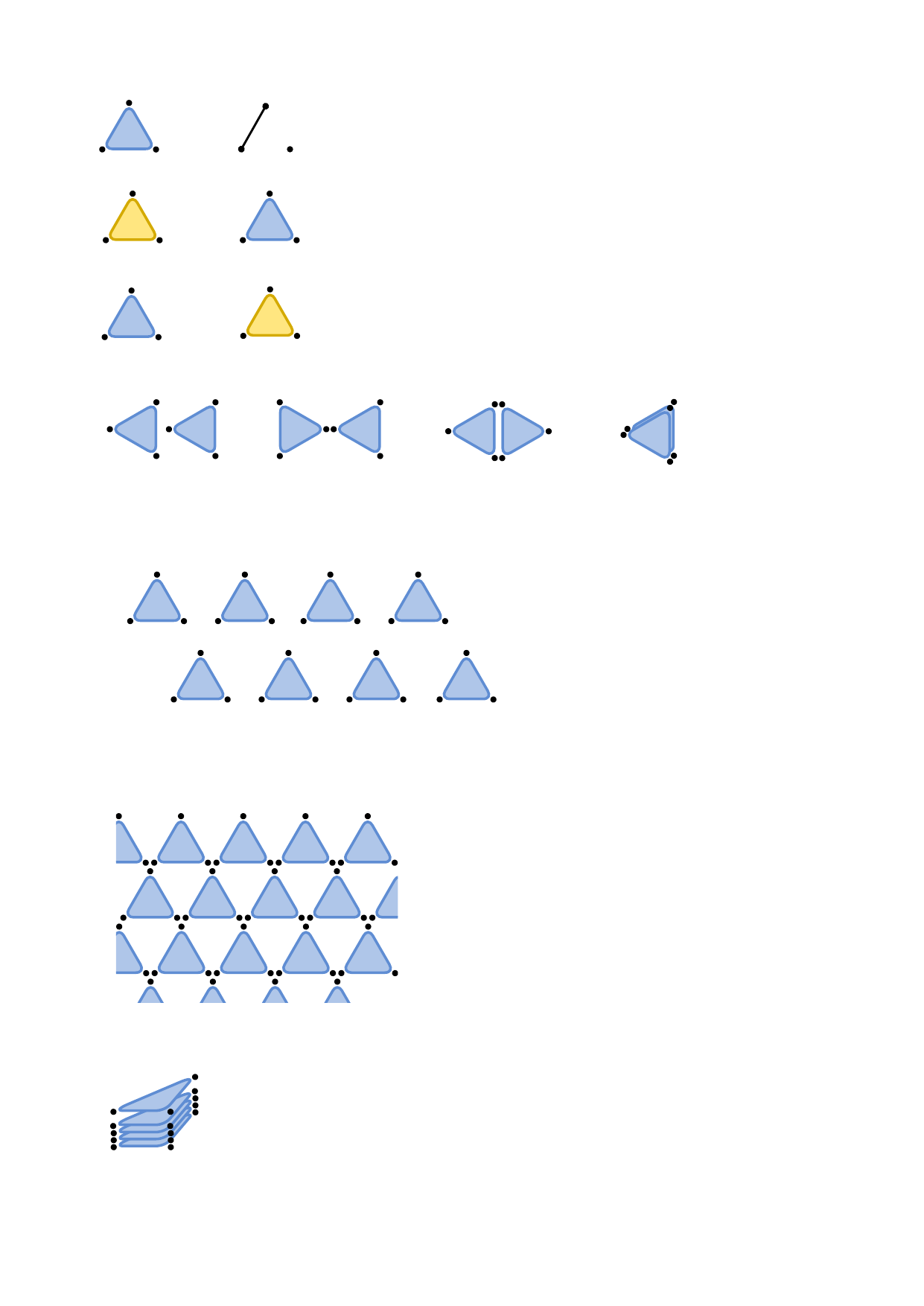}
\put(110,2){.}
\end{overpic}
\end{center}  
\end{itemize}

Clearly, if $t_e \geq t'_e$ for all $n$ edges of $H$, then $t_H \geq t_H'$ as can be read off from the definition of $t_H$ (Definition \ref{def:entanglement-structure}), since grouping tensor factors enlarges the maps used for restrictions from 
$\otimes_{e\ni v} m_v^{(e)}$ to arbitrary linear maps taking $\otimes_{e\ni v} \CC^{d_v^{(e)}} $ to $\otimes_{e\ni v} \CC^{d_v^{'(e)}}$. 
Equivalently, we may first place the tensors $t_e$ on the disconnected hypergraph $H_{\disj_n}$ with edges $e$. Note that 
$t_{H_{\disj_n}}\geq t'_{H_{\disj_n}}$ implies $t_{H}\geq t'_{H}$, since $H$ can be obtained from $H_{\disj_n}$ by grouping vertices, i.e.\ by partitioning the vertices or by applying a hypergraph homomorphism\footnote{For us a \emph{hypergraph homomorphism} is a map from the vertex set of one hypergraph to another such that each hyperedge maps to a hyperedge.}. More generally, if $H$ can be obtained from $\tilde{H}$ by grouping of vertices, then $t_{\tilde{H}} \geq t'_{\tilde{H}}$ implies $t_{H} \geq t'_{H}$, since again the grouping of vertices enlarges the maps that can be used to effect a restriction. Restrictions $t_{H} \geq t'_{H}$ have recently been studied in their own right in the context of lattices, motivated by the study of tensor networks \cite{christandl2020tensor,christandl2023resource}.

We now focus on the case of uniform hypergraphs, i.e. hypergraphs where each edge has the same size, say size $k$, and the situation in which we associate to each edge the same $k$-tensor $t$. The resulting entanglement structure $t_H$ may now be viewed as an $H$-dependent tool with which to study $t$ itself. This is a key change of perspective which we wish to emphasize in this work. In particular, this allows us to consider the study of the restriction $t_H \geq t'_H$ as a lens through which we view the comparison of $t$ and $t'$. We therefore introduce the following preorders on tensors. 
\begin{definition}
Let $H$ be a $k$-uniform hypergraph and $t, t'$ two $k$-tensors. We say that $t$ \emph{$H$-restricts} to $t'$, and write $t \geq_H t'$, whenever $t_H \geq t'_H$. 
\end{definition}
$H$-restriction is weaker than restriction, i.e.\ $t\geq t'$ implies $t \geq_H t'$, or $$\geq \implies \geq_H$$ for short. More generally, for $H$ that can be obtained from $\tilde{H}$ by grouping, we have 
$$\geq_{\tilde{H}} \implies \geq_H.$$
For patches of lattices (or more generally hypergraphs) which can be folded onto the Strassen hypergraph (i.e.\ the vertices can be grouped such that we obtain the Strassen hypergraph) we have 
\begin{align}\label{eq:implies} 
\geq_{\disj_n} \implies \geq_{\lattice_n} \implies  \geq_{\Str_n}
\end{align}
and we will for the simplicity of the following discussion focus on such lattices.

$H$-restriction is strictly weaker than restriction, meaning that there are $t, t'$ s.th.~$t \not\geq t'$ but $t_H\geq t_H'$, precisely when $H$ is not (Berge) acyclic \cite{landsberg2012geometry, christandl2023resource}. Even if $H$ is acyclic and $H$-restriction therefore the same as restriction, the study of the associated tensor parameters is still meaningful as they remain non-trivial as we had seen in Example \ref{ex:tensorproduct}. We therefore introduce the \emph{$H$-rank} of $t$ as 
$R_H(t):=R(t_H)$ and the \emph{$H$-subrank} of $t$ as 
$Q_H(t):=Q(t_H)$ which, just as restriction, weakens under grouping
$R_{\tilde{H}}(t)\geq R_{H}(t)$ and 
$Q_{H}(t) \geq Q_{\tilde{H}}(t)$.
 $H$-rank and $H$-subrank are also natural restriction-monotone functions, which can serve as obstructions for $H$-restriction. Similarly, we may introduce $H$-degeneration,  $H$-border rank and $H$-border subrank and relate them to their restriction versions through Lemma \ref{lemma:interpolation}. 

In the following we will consider the behaviour of $H_n$-restriction for large-$n$, leading to new notions of asymptotic preorders on tensors.

\section*{Asymptotic Hypergraph Restriction}
\label{se:asy}
Strassen's asymptotic preorder $\geqas$, which is defined as 
$$t \geqas t' \quad \mathrm{ if } \quad \bigoplus_{i=1}^{2^{o(n)}}t^{\boxtimes n} \geq t'^{\boxtimes n},$$ 
plays an instrumental role in a systematic understanding of the matrix multiplication exponent $\omega$ \cite{burgisser2013algebraic, christandl2023universal}. The famous conjecture 
$\omega=2$, in particular, has the compact formulation $\GHZ_4 \geqas {\EPR_2}_{\triangle}.$

By taking an appropriate large-$n$ limit of $\geq_{H_n}$, we will introduce a set of preorders generalizing asymptotic restriction. The small direct sum in Strassen's preorder has the purpose of making the definition robust against minor changes, like swapping restriction against degeneration.\footnote{Sometimes the equivalent definition $t \geqas t' $ if $t^{\boxtimes n+o(n)} \geq t'^{\boxtimes n}$ is used. We choose the small direct sum instead, as it is better suited for the generalisation presented in this work. Equivalence follows as, on the one hand, the rank of $t$ is finite, and, on the other hand, asymptotic subrank of $t$ is non-trivial as soon as $t$ is not a tensor product of smaller order tensors.\cite{strassen1988asymptotic}} This aspect will also be important for our more general considerations and we therefore propose the following preorders on tensors. 

\begin{definition} \label{def:preorder}
Let $\bH=\{H_n\}_{n\in \NN}$ be a sequence of $k$-uniform hypergraphs and $t, t'$ be $k$-tensors. We say that \emph{$t$ ${\bf H}$-restricts to $t'$} and write
$$t \geqas_{\bf H} t' \quad \mathrm{if} \quad \bigoplus^{2^{o(n)}}_{i=1} t  \geq_{H_n} t'.$$    
\end{definition}

Asymptotic restriction is recovered when choosing $H_n=\Str_n$, i.e.\ $\geqas$ equals $\geqas_{\bStr}$, where $\bStr=\{\Str_n\}_{n\in \NN}$. The indifference in this case to the use of degeneration instead of restriction is proved with polynomial interpolation. Whereas we leave the general question open whether $\bH$-restriction and $\bH$-degeneration are identical, we show it for cases, where $\bH$ has the following property. 
\begin{definition}
    Let $\bH=\{H_n\}_{n\in \NN}$ be a sequence of $k$-uniform hypergraphs. We say that $\bH$ is \emph{subadditive} if for $n_0(n) \in o(n)\cap \omega(1)$, there are $r(n) \in o(n)$ and $\nu(n)$ s.th.\ for all $n\in \NN$, $H_{n}$ can be obtained by grouping the vertices of the disjoint union of $\nu$ copies of $H_{n_0}$ and some $\tilde{H}_r$.
\end{definition}

Note that $\nu \leq n/n_0$.
 The examples in this manuscript are subadditive as the following illustrates.
\begin{example}
    Consider a $d$-dimensional lattice, where $H_n$ is a hypercubic patch of the lattice (obtained by cutting the infinite lattice with a hypercube and by adding a few edges in order to make all natural numbers $n$ possible). Fix a small hypercubic patch with side length $N_0$ and $n_0=N_0^d$ edges and fill the larger $H_n$ with $\nu$ (which is roughly $n/n_0$) copies of $H_{n_0}$. Choose $N_0\in o(n^{\frac{1}{d}})\cap \omega(1)$. The remaining $r$ edges satisfy $r\in o(n)$, since they arise as a surface term. 
\end{example}

Before showing that $\bH$-degeneration and $\bH$-restriction are the same, we discuss the following consequence of Lemma \ref{lemma:interpolation}, which can be found in \cite{christandl2020tensor} stated without the notion of a hypergraph preorder.
\begin{lemma}\label{lemma:interpolation-structure}
    Let $H_n$ be a $k$-uniform hypergraph and let 
    $t \geqdeg t'$ be two $k$-tensors. Then
    $$\bigoplus_{i=1}^{O(n)} t\geq_{H_n} t'$$
\end{lemma}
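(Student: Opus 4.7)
The plan is to invoke Lemma~\ref{lemma:interpolation} at the level of $k$-tensors to replace the border relation $t \geqdeg t'$ by an exact restriction with a direct-sum overhead, and then to transport that restriction to $H_n$ edge by edge. Concretely, from $t \geqdeg t'$ I fix constants $d,e\in\NN$ with $t \geqdeg_d^e t'$; these depend only on the pair $(t,t')$, not on $n$. Lemma~\ref{lemma:interpolation} applied to this single-tensor border relation produces a genuine restriction $\bigoplus_{i=1}^{e+1} t \geq t'$ at the level of $k$-tensors, the $e+1$ summands corresponding to $e+1$ distinct interpolation points for the polynomial parameter $\eps$.

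To lift this from a single-tensor restriction to one of the form $\geq_{H_n}$, I use the edge-wise functoriality implicit in the definition $t_H = \bigotimes_{e\in E} t_e$: if $s \geq s'$ as $k$-tensors via maps $\mu_1,\dots,\mu_k$, then placing $s$ on every edge of $H_n$ and applying, at each vertex $v$, the grouped linear map $\bigotimes_{e \ni v} \mu_{\pi_e(v)}$ on $\bigotimes_{e\ni v}\CC^{d_v^{(e)}}$ sends $s_{H_n}$ to $s'_{H_n}$, i.e.\ $s \geq_{H_n} s'$; here $\pi_e(v)\in\{1,\ldots,k\}$ records which factor $v$ plays in the hyperedge $e$. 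Applied with $s=\bigoplus_{i=1}^{e+1} t$ and $s'=t'$ this gives $\bigoplus_{i=1}^{e+1} t \geq_{H_n} t'$, which implies the claim since $e+1\in O(1)\subseteq O(n)$.

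I do not anticipate any real obstacle: the analytic content sits entirely in Lemma~\ref{lemma:interpolation}, and the edge-wise step is just the observation, made already in the paragraph before Definition~\ref{def:preorder}, that grouping tensor factors at each vertex can only enlarge the space of admissible restriction maps. The slack from the $O(1)$ that this argument actually produces to the stated $O(n)$ bound is harmless and in fact convenient, since in Definition~\ref{def:preorder} any overhead of size $2^{o(n)}$ will be absorbed downstream.
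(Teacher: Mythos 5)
Your argument comes apart at the last step because it silently reinterprets what $\bigoplus_{i=1}^{O(n)} t \geq_{H_n} t'$ asserts. In this paper the direct sum is \emph{global}: as the expansion at the start of the proof of Theorem~\ref{th:subadditivie} and the sentence immediately following Lemma~\ref{lemma:interpolation-structure} make explicit, the claim is
$$\bigoplus_{i=1}^{O(n)} \bigl(t_{H_n}\bigr) \;\geq\; t'_{H_n},$$
equivalently $\GHZ_{O(n)}^{(|V_n|)} \boxtimes t_{H_n} \geq t'_{H_n}$: a \emph{single} auxiliary GHZ-state of $O(n)$ levels shared over all vertices, i.e.\ only $\log n + O(1)$ extra qubits per vertex. (This reading is also forced by the requirement that $\geqas_{\bStr}$ recover Strassen's $\bigoplus_{i=1}^{2^{o(n)}} t^{\boxtimes n} \geq t'^{\boxtimes n}$; the literal edge-wise reading would give $\bigl(\bigoplus_i t\bigr)^{\boxtimes n}$ instead.) What your edge-wise lift delivers is $\bigl(\bigoplus_{i=1}^{e+1} t\bigr)_{H_n} \geq t'_{H_n}$, in which a fresh $(e+1)$-fold direct sum sits on each of the $n$ hyperedges. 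Rewritten as a global direct sum of copies of $t_{H_n}$ this is $(e+1)^n = 2^{O(n)}$ copies, exponentially more than the lemma claims, and as a local resource it multiplies the dimension at a vertex $v$ by $(e+1)^{\deg v}$ rather than adding logarithmically many qubits. The entire content of the lemma --- emphasized in the paragraph right after its statement --- is precisely that the linear, global direct sum suffices, and that is not obtained by interpolating edge by edge.

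The fix is to reverse the order of your two operations: tensor first, interpolate once at the end. From $t \geqdeg_d^e t'$, place the degenerating maps (polynomials in one common variable $\eps$) on every hyperedge of $H_n$ and group at the vertices; since the $\eps$-degrees multiply out additively under the tensor product over edges, this yields a single degeneration $t_{H_n} \geqdeg_{nd}^{ne} t'_{H_n}$ of $|V_n|$-tensors whose error degree grows only linearly in $n$. Now apply Lemma~\ref{lemma:interpolation} to \emph{this} degeneration to get $\bigoplus_{i=1}^{ne+1} t_{H_n} \geq t'_{H_n}$, which is the assertion with constant $ne+1 \in O(n)$. Your edge-wise functoriality observation is correct as far as it goes, but it must be applied to the $\eps$-dependent degenerating maps before interpolation, not to the interpolated restriction afterwards.
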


The merely linear direct sum enables unexpected restrictions on spaces which have exponentially growing dimension: Whereas $\GHZ_2 \not\geq_{H_n} W$ on any three-uniform hypergraph $H_n$ that folds onto the Strassen hypergraph\footnote{The argument here is that $R(W^{\boxtimes n})>2^n$ \cite{chen2010,zuiddam2017}. Note that \cite{christandl2023resource} gives a different argument that holds for another set of hypergraphs. Both sets contain the triangular and kagome lattices, which we will consider later as examples.}, since $\GHZ_2 \geqdeg W$ we find $\bigoplus_{i=1}^{O(n)} \GHZ_2 \geq_{H_n} W$ with help of Lemma \ref{lemma:interpolation-structure}. In other words, the restriction is enabled by a global GHZ-state $\GHZ_{O(n)}^{|V_n|}=(\GHZ_2^{|V_n|})^{\boxtimes (\log n+O(1))}$ of only a logarithmic number of qubits locally and is therefore a powerful tool in constructions. As mentioned earlier, Lemma \ref{lemma:interpolation} also enables us to show that $\bH$-restriction equals to $\bH$-degeneration in cases of interest to us.

\begin{theorem} \label{th:subadditivie}
   Let $\bH=\{H_n\}_{n\in \NN}$ be a subadditive sequence of $k$-uniform hypergraphs. Then $\bH$-restriction and $\bH$-degeneration are the same, i.e.
   $$\bigoplus^{2^{o(n)}}_{i=1} t  \geq_{H_n} t' \quad \mathrm{iff} \quad
   \bigoplus^{2^{o(n)}}_{i=1} t  \geqdeg_{H_n} t'.$$    
\end{theorem}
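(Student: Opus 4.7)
The forward direction is immediate, since restriction implies degeneration. For the converse, assume $\bigoplus_{i=1}^{2^{o(n)}} t \geqdeg_{H_n} t'$. My plan is to convert the $H_n$-degeneration into an $H_{n_0}$-restriction at a much smaller scale $n_0$ using polynomial interpolation (Lemma \ref{lemma:interpolation}), and then reassemble the full $H_n$ via the subadditive decomposition, treating the residual hypergraph $\tilde{H}_r$ separately by a constant-size global restriction extracted at $n = 1$.

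Pick $n_0 = n_0(n) \in o(n) \cap \omega(1)$ to be tuned later; subadditivity supplies $r \in o(n)$ and $\nu \leq n/n_0$ with $H_n$ a grouping of $\nu \cdot H_{n_0} \sqcup \tilde{H}_r$. The hypothesis at $n_0$ reads $\bigoplus_{i=1}^{2^{o(n_0)}} t \geqdeg_{H_{n_0}} t'$, and Lemma \ref{lemma:interpolation} converts this into a restriction with an overhead that is at most exponential in $n_0$ (via $e+1 \leq \dim t'_{H_{n_0}} \leq (\dim t')^{k n_0}$ after normalizing the degeneration). Thus $\bigoplus_{i=1}^{N_0} t \geq_{H_{n_0}} t'$ for some $N_0 \leq 2^{O(n_0)} \subseteq 2^{o(n)}$.

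For $\tilde{H}_r$, the hypothesis at $n = 1$ (where $H_1$ is a single hyperedge in the paradigmatic Disjoint, Lattice, and Strassen families, so $t_{H_1} = t$) gives $\bigoplus_{i=1}^{O(1)} t \geqdeg t'$, which Lemma \ref{lemma:interpolation} promotes to $\bigoplus_{i=1}^C t \geq t'$ for a constant $C = O(1)$; placing this restriction edgewise on $\tilde{H}_r$ yields $\bigoplus_{i=1}^C t \geq_{\tilde{H}_r} t'$. In parallel, the $H_{n_0}$-restriction tensors up to $\bigoplus_{i=1}^{N_0} t \geq_{\nu \cdot H_{n_0}} t'$, since the hypergraph tensor on a disjoint union factors as a tensor product and restrictions tensor. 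Setting $M := \max(N_0, C) = N_0$, both pieces share the common source $\bigoplus_{i=1}^{M} t$ (by projection), so tensoring gives $\bigoplus_{i=1}^{M} t \geq_{\nu \cdot H_{n_0} \sqcup \tilde{H}_r} t'$, and grouping vertices preserves restriction, yielding $\bigoplus_{i=1}^{M} t \geq_{H_n} t'$ with $M \in 2^{o(n)}$, as required.

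The main obstacle is controlling the interpolation overhead: applying Lemma \ref{lemma:interpolation} directly to the $H_n$-degeneration would incur a cost of $2^{O(n)}$, overwhelming the $2^{o(n)}$ budget. Subadditivity is exactly the device that lets us pay this exponential-in-$n_0$ cost at a sub-linear scale $n_0$, while stitching the pieces together at no additional asymptotic cost. A secondary technicality is extracting the global restriction $\bigoplus_{i=1}^{C} t \geq t'$ needed for $\tilde{H}_r$; this is immediate when $H_1$ is a single hyperedge, but would demand a separate argument for pathological sequences.
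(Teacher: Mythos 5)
There is a genuine gap, and it sits exactly at the step your whole strategy depends on: ``the $H_{n_0}$-restriction tensors up to $\bigoplus_{i=1}^{N_0} t \geq_{\nu \cdot H_{n_0}} t'$''. In this paper the direct sum in $\bigoplus^{N} t \geq_{H} t'$ is a single \emph{global} direct sum of the whole hypergraph tensor, i.e.\ $\bigoplus^{N} t_{H} \geq t'_{H}$, equivalently $\GHZ_N^{(|V|)}\boxtimes t_H\geq t'_H$ (this is forced by the claimed equivalence with Strassen's preorder for $H_n=\Str_n$ and is how the paper's own proof expands the hypothesis). Taking the $\nu$-fold tensor power of $\bigoplus^{N_0} t_{H_{n_0}} \geq t'_{H_{n_0}}$ therefore yields $\bigoplus^{N_0^{\nu}} t_{\nu\cdot H_{n_0}} \geq t'_{\nu\cdot H_{n_0}}$: the overhead multiplies. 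With your $N_0 = 2^{O(n_0)}$ and $\nu \approx n/n_0$ this is $2^{O(n_0)\cdot n/n_0} = 2^{O(n)}$, which destroys the $2^{o(n)}$ budget --- and no choice of $n_0$ helps, since you would need the per-copy overhead to be $2^{o(n_0)}$, which neither the error-degree bound $e+1$ nor your dimension bound $(\dim t')^{kn_0}$ delivers. (The same conflation appears when you ``share the common source by projection'' to combine the two pieces, and in the $\tilde H_r$ piece, though there $C^{r}=2^{O(r)}=2^{o(n)}$ so that instance is harmless.) A secondary issue: the parenthetical claim $e+1\leq \dim t'_{H_{n_0}}$ ``after normalizing the degeneration'' is not what Lemma \ref{lemma:interpolation} provides (it gives $e+1$ or $\binom{k-1+d}{k-1}$), and is asserted without proof.

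The paper's proof avoids the blow-up by reversing your order of operations: it takes $\nu$ tensor copies of the \emph{degeneration} at scale $n_0$ first, under which the error degrees merely add to $\nu e(n_0)$ (and the hypothesis' own direct sums multiply to the harmless $2^{\nu f(n_0)}=2^{nf(n_0)/n_0}=2^{o(n)}$), and only then applies Lemma \ref{lemma:interpolation} \emph{once}, at a cost of a single global factor $\nu e(n_0)+1$, i.e.\ $\log_2(\nu e(n_0)+1)$ extra $\GHZ_2$'s. The remaining problem --- that $e(n_0)$ is an uncontrolled function of $n_0$ --- is handled not by a dimension bound but by choosing $n_0(n)=\min\{\lfloor\sqrt n\rfloor,\max\{m: e(m)\leq 2^{\sqrt n}\}\}$, which still lies in $o(n)\cap\omega(1)$ and forces $\log_2 e(n_0)\leq\sqrt n=o(n)$. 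Your decomposition of $H_n$ via subadditivity and your treatment of the residual $\tilde H_r$ match the paper, but the interpolate-then-tensor order is not a stylistic variant; it is the point where the argument fails.
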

\begin{proof}
    Since $\geq$ implies $\geqdeg$, the first statement implies the second. We now expand the second statement into
    \begin{align}
        \label{eq:bound1}    \GHZ_2^{\boxtimes f(n)} \boxtimes \ t_{H_n} \geqdeg^{e(n)} t'_{H_n},
    \end{align}
    where $\GHZ_2$ extends over all vertices of $H_n$, $f(n)\in o(n)$ and $e(n)$ is the $n$-dependent error degree of the degeneration.
    We will also use the bound 
    \begin{align}
        \label{eq:bound2}
        \GHZ_2^{\boxtimes O(n)} \geq t'_{H_n}
    \end{align}
which comes from the fact that $t'$ has finite tensor rank and that $H_n$ has $n$ hyperedges.
Fix now $n_0, n, r$ s.th. $n=\nu n_0+r$. Taking $\nu $ copies of \eqref{eq:bound1} with $n_0$ instead of $n$ we find
    $$\GHZ_2^{\boxtimes\nu f(n_0)} \boxtimes \ t_{\nu \cdot H_{n_0}} \geqdeg^{\nu e(n_0)} t'_{\nu \cdot H_{n_0}}.$$
By Lemma \ref{lemma:interpolation} this implies 
    $$\GHZ_2^{\boxtimes (\nu f(n_0)+\lceil\log_2 (\nu e(n_0)+1)\rceil)} \boxtimes \ t_{\nu \cdot H_{n_0}} \geq t'_{\nu \cdot H_{n_0}}.$$
We now tensor \eqref{eq:bound2} with $r$ instead of $n$ to this inequality and obtain 
    $$\GHZ_2^{\boxtimes (\nu f(n_0)+\log_2 (\nu e(n_0)+1))+O(r))} \boxtimes \ t_{\nu \cdot H_{n_0}\oplus H_r} \geq t'_{\nu \cdot H_{n_0}\oplus H_r}.$$
Set now $n_0\equiv n_0(n)=\min\{\lfloor \sqrt{n}\rfloor, \max \{m: e(m) \leq 2^{\sqrt{n}}\}\}$ and note that it satisfies the assumptions $n_0(n) \in o(n)\cap \omega(1)$ as required by the subadditivity definition. Since $\bH$ is subadditive, we now find $r(n)\in o(n)$ and know that $\nu \cdot H_{n_0}\oplus H_r$ can be grouped to $H_n$. This implies
    $$\GHZ_2^{\boxtimes (n f(n_0)/n_0+\log_2 (n/n_0 e(n_0)+1)+ o(n))} \boxtimes \ t_{H_{n}} \geq t'_{H_{n}}.$$
 Since $n_0(n) $ is a growing function of $n$, the first part of the exponent is $o(n)$. Since $e(n_0)\leq 2^{\sqrt{n}}$ the remainder of the exponent is also $o(n)$. We therefore see that the exponent is $o(n)$ and the first condition in the statement is fulfilled.
\end{proof}

We leave it as an open question to settle whether this theorem extends to all $\bH$ and if not to investigate the novel asymptotic degeneration preorders in their own right.

Since grouping weakens $H_n$-restriction, we obtain 
$$\geqas_{\bdisj} \implies \geqas_{\blattice} \implies \geqas_{\bStr}. $$
Note that constructions for asymptotic restrictions on the left imply constructions on the right and obstructions for the right imply obstructions for the left. 

Obstructions are typically obtained from monotone functions. Cost and value are in general resource theories the canonical monotone functions. For the restriction resource theory these were directly defined via the preorder, e.g.
$$R(t) := \min \{r: \GHZ_r \geq t\}.$$
In an asymptotic context one might be tempted to introduce 
$$\min \{r: \GHZ_r \geqas t\}$$
as the cost, but this turns out not to be such a useful quantity, e.g.\ because it only assumes integral values. 
The cost is therefore better defined as the regularization or amortization of the tensor rank ($\liminf$ can be replaced by $\lim$ here)
\begin{align*}\asymprank(t) :=\liminf_{n\rightarrow \infty}R(t^{\boxtimes n})^{\frac{1}{n}}
&= \liminf_{n\rightarrow \infty}R_{\Str_n}(t)^{\frac{1}{n}}.
\end{align*}
$\asymprank(t)$ is monotone under asymptotic restriction and captures the matrix multiplication exponent via $\omega= \log_2 \asymprank({\EPR_2}_\triangle)$. 

The natural generalization to $\bH$-rank reads
\begin{align*}\asymprank_{\bH}(t)& :=\liminf_{n\rightarrow \infty} R_{H_n}(t)^{\frac{1}{n}}
\end{align*}
and it is easy to see that  the $\bH$-rank is an $\bH$-restriction monotone. Likewise, we can define the $\bH$-subrank 
$$\asympsub_{\bH}(t):=\limsup_{n\rightarrow \infty}  Q_{H_n}(t)^{\frac{1}{n}}$$
as the value. $\bH$-subrank is also an $\bH$-restriction monotone.

Similarly, we may define $\bH$-border rank and $\bH$-border subrank. Whereas we leave the relation of the latter to the $\bH$-subrank in the unclear, we show that the former two coincide for subadditive $\bH$ by a similar argument to the above.
\begin{theorem} Let $\bH$ be subadditive. Then
    $\bH$-border rank equals $\bH$-rank.
\end{theorem}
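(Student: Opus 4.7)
The inequality ``asymptotic border rank $\leq$ asymptotic rank'' is immediate from $R(s) \geq \underline{R}(s)$ applied to each $s = t_{H_n}$. For the reverse direction, the plan is to mirror the proof of Theorem \ref{th:subadditivie}, but seeded by a single optimal border-rank witness at a small scale $n_0$ rather than by an abstract asymptotic $\bH$-degeneration.

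First, I fix $n_0$ and set $r_{n_0} := \underline{R}_{H_{n_0}}(t)$, witnessed by $\GHZ_{r_{n_0}} \geqdeg^{e(n_0)} t_{H_{n_0}}$. Tensoring $\nu$ copies (degeneration errors add under $\boxtimes$) yields $\GHZ_{r_{n_0}^{\nu}} \geqdeg^{\nu e(n_0)} t_{\nu \cdot H_{n_0}}$, and Lemma \ref{lemma:interpolation} converts degeneration to exact restriction at the cost of a direct sum:
$$R(t_{\nu \cdot H_{n_0}}) \leq (\nu e(n_0) + 1)\, r_{n_0}^{\nu}.$$
Multiplying by the trivial bound $R(t_{\tilde H_r}) \leq R(t)^{r}$ and invoking subadditivity — so that $H_n$ is a grouping of $\nu \cdot H_{n_0} \oplus \tilde H_r$ with $\nu \leq n/n_0$ and $r = o(n)$ — yields
$$R_{H_n}(t) \leq (\nu e(n_0) + 1)\, r_{n_0}^{\nu}\, R(t)^{r}.$$

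Next, I would choose $n_0 = n_0(n)$ growing slowly as in Theorem \ref{th:subadditivie} — for example $n_0(n) = \min\{\lfloor \sqrt{n}\rfloor,\, \max\{m : e(m) \leq 2^{\sqrt{n}}\}\}$ — so that $n_0 \in o(n) \cap \omega(1)$ and $e(n_0(n)) = 2^{o(n)}$. I would additionally arrange that $n_0(n)$ passes through a subsequence along which $\underline{R}_{H_m}(t)^{1/m}$ realises the $\liminf$ defining the $\bH$-border rank. Taking $n$-th roots, the factors $(\nu e(n_0)+1)^{1/n}$ and $R(t)^{r/n}$ both tend to $1$, while $(r_{n_0}^{1/n_0})^{\nu n_0 / n}$ tends to the $\bH$-border rank (using $\nu n_0 / n \to 1$), giving the matching upper bound for the $\bH$-rank.

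The main obstacle is the simultaneous coordination required in the choice of $n_0(n)$: it must keep $e(n_0(n))$ sub-exponential in $n$ and simultaneously hit a subsequence on which $\underline{R}_{H_m}(t)^{1/m}$ realises its $\liminf$, all while respecting $n_0(n) \in o(n) \cap \omega(1)$ so that subadditivity applies. This is a standard diagonal-construction refinement of the choice already made in Theorem \ref{th:subadditivie}, slightly more delicate only because here $e(m)$ denotes the error degree of an optimal border-rank witness at scale $m$, rather than of a single fixed degeneration.
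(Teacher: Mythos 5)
Your proposal is correct and follows essentially the same route as the paper's proof: seed with a border-rank witness at a small scale $n_0$, tensor up $\nu$ copies (error degrees adding), convert to exact restriction via Lemma \ref{lemma:interpolation}, absorb the $o(n)$ remainder $\tilde H_r$ with a trivial rank bound, group via subadditivity, and let $n_0(n)$ grow slowly so that $e(n_0(n))$ stays subexponential. The only differences are cosmetic (you use $\GHZ_{r_{n_0}}$ where the paper uses $\GHZ_2^{\boxtimes\lceil\log_2\underline{R}(t_{H_{n_0}})\rceil}$), and your explicit handling of the $\liminf$ by routing $n_0(n)$ through a realising subsequence is a point the paper's proof glosses over.
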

\begin{proof}
By definition, $\bH$-border rank is smaller than $\bH$-rank. 
 Let $$f(n)=\lceil\log \underline{R}(t_{H_n})\rceil.$$ Then by definition
    $$\GHZ_2^{\boxtimes f(n)} \geqdeg^{e(n)} t'_{H_n},$$
    where $\GHZ_2$ extends over all vertices of $H_n$.
    We will also use the bound 
    $$\GHZ_2^{\boxtimes O(n)} \geq t'_{H_n}.$$
        Fix now $n_0, \nu, r$ s.th. $n=\nu n_0+r$. Taking $\nu $ copies of the first bound with $n_0$ instead of $n$ we find
    $$\GHZ_2^{\boxtimes \nu f(n_0)} \geqdeg^{\nu e(n_0)} t'_{\nu \cdot H_{n_0}}.$$
By Lemma \ref{lemma:interpolation} this implies 
    $$\GHZ_2^{\boxtimes (\nu f(n_0)+\lceil \log_2 (\nu e(n_0)+1)\rceil)} \geq t'_{\nu \cdot H_{n_0}}.$$
We now tensor to this inequality the second bound with $r$ instead of $n$ and obtain 
    $$\GHZ_2^{\boxtimes (\nu f(n_0)+\log_2 (\nu e(n_0)+1))+O(r))}  \geq t'_{\nu \cdot H_{n_0}\oplus H_r}.$$
Let now $n_0\equiv n_0(n)=\min\{\sqrt{n}, \max \{m: e(m) \leq 2^{\sqrt{n}}\}\}$ and note that it is $o(n)\cap \omega(1)$ as required by the subadditivity definition. 
Since $\bH$ is subadditive we now find $r(n)$ and know that $\nu \cdot H_{n_0}\oplus H_r$ can be grouped to $H_n$. This implies 
    $$\GHZ_2^{\boxtimes (n f(n_0)/n_0+\log_2 (n/n_0 \cdot e(n_0)+1))+o(n))} \geq t'_{H_{n}}.$$
Since $n_0(n) $ is a growing function of $n$ and $f(n)\in O(n)$ we see that the first part of the exponent has the same limit as $f(n)/n$, as desired. Since $e(n_0)\leq 2^{\sqrt{n}}$ the remaining terms in the exponent are $o(n)$. Overall, we find that $\bH$-rank is smaller than $\bH$-border rank.
   \end{proof}

By definition we have $\asymprank_{\bH}(t) \geq  \asympsub_{\bH}(t)$. As before, grouping vertices in the hypergraph leads to relations among the associated quantities, in this case the asymptotic ranks:
$$\asymprank_{\bdisj}(t)\geq \asymprank_{\blattice}(t)\geq \asymprank_{\bStr}(t). $$

We summarise the arising novel asymptotic resource theories:
\begin{resource}[$\geqas_{\bH}$]
The resource theory of tensors under $\bH$-restriction is given by:
    \begin{itemize}
    \item (resource) $t$ a $k$-tensor
    \item (transformation) $\bH$-restriction $\geqas_{\bH}$
    \item (unit) unit tensor or GHZ-state $\GHZ^{(k)}_r$
    \item (cost) $\bH$-rank $\asymprank_{\bH}(t)$
    \item (value) $\bH$-subrank $\asympsub_{\bH}(t)$
\end{itemize}
\end{resource}

Note that both $\bH$-rank and $\bH$-subrank are merely subnormalised, i.e. 
$$R_\bH(\GHZ^{(k)}_r)\leq r \ \mathrm{and} \ Q_\bH(\GHZ_r)\leq r.$$
Strict inequality for the rank occurs when the hypergraph is acyclic for an extensive number of edges. This is so, for instance, in the matrix multiplication case, which can be formulated as follows: let $k=2$ and consider $H_n$ be the graph on three vertices with roughly $\frac{n}{3}$ edges between each pair of vertices. Then $\asymprank_{\bH}(\EPR_2)=2^{\frac{\omega}{3}}<2$ where we note that $\EPR_2=\GHZ^{(2)}_2$. Subrank is mostly strictly subnormalised and even equals to one when the hypergraph is mostly disconnected, for instance in the case $H_\bdisj$ considered earlier.

In the following we discuss the three examples of $\bdisj$-, $\blattice$- and $\bStr$-restriction in more detail and place previous work in the context of the new asymptotic preorders put forward in this work. We always begin the discussion with  constructions and subsequently elaborate on tools for obstructions.

\paragraph{Disjoint}

 In \cite{christandl2018tensor} it was shown that $R(W \otimes W)<R(W)^2$ and hence that the rank under the ordinary tensor product is not multiplicative in general (see Example \ref{ex:tensorproduct}). The same phenomenon was observed for border rank \cite{christandl2019border} and in order to study the amortized quantification of this phenomenon the \emph{tensor asymptotic rank}
 $$R^{\otimes}(t):=\lim_{n\rightarrow \infty} R(t^{\otimes n})^{\frac{1}{n}}$$ was introduced. 
 In our notation this quantity equals
 $$\asymprank_{\bdisj}(t):= \lim_{n\rightarrow \infty} R_{\disj_n}(t)^{\frac{1}{n}}$$
and is associated with the preorder  $\geqas_{\bdisj}$.
 
A nontrivial construction is obtained from Lemma \ref{lemma:interpolation-structure} when applied to the disjoint hypergraph.
\begin{theorem}[Construction \cite{christandl2018tensor}] \label{th:border}
$$\underline{R}(t) \geq \asymprank_{\bdisj}(t)$$
\end{theorem}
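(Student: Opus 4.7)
The plan is to derive the inequality directly from Lemma \ref{lemma:interpolation-structure} applied to the disjoint hypergraph sequence $\bdisj = \{\disj_n\}_{n \in \NN}$, together with the elementary computation of the tensor rank of a GHZ state placed on disjoint edges. The structural content is that border rank allows a degeneration from $\GHZ_{\underline{R}(t)}$ to $t$, and the lemma upgrades any such degeneration to an exact restriction on $H_n$ at the cost of a direct sum of only linearly many copies of the source — a negligible price once one takes the $n$-th root.

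Concretely, I would start from the defining relation $\GHZ_{\underline{R}(t)} \geqdeg t$ supplied by the definition of border rank. Feeding this degeneration, together with the hypergraph sequence $H_n = \disj_n$, into Lemma \ref{lemma:interpolation-structure} yields
$$\bigoplus_{i=1}^{O(n)} \GHZ_{\underline{R}(t)} \; \geq_{\disj_n} \; t.$$
Using the identity $\bigoplus_{i=1}^{r} \GHZ_s \sim \GHZ_{rs}$ recorded in Section \ref{sec:hypergraph}, the left-hand side is equivalent to $\GHZ_{c(n) \cdot \underline{R}(t)}$ for some $c(n) \in O(n)$. Unfolding the definition of $\disj_n$-restriction gives $t_{\disj_n} \leq (\GHZ_{c(n)\cdot \underline{R}(t)})_{\disj_n} = \GHZ_{c(n)\cdot \underline{R}(t)}^{\otimes n}$ as $(kn)$-tensors.

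Now I would bound the tensor rank of the right-hand side by submultiplicativity of rank under $\otimes$: since $R(\GHZ_s) = s$, one has $R(\GHZ_s^{\otimes n}) \leq s^n$ (viewed as a $kn$-tensor, simply by tensoring together $n$ rank-$s$ decompositions). Therefore
$$R_{\disj_n}(t) \;=\; R(t_{\disj_n}) \;\leq\; \bigl(c(n)\cdot \underline{R}(t)\bigr)^{n}.$$
Taking $n$-th roots and passing to the liminf yields
$$\asymprank_{\bdisj}(t) \;=\; \liminf_{n\to\infty} R_{\disj_n}(t)^{1/n} \;\leq\; \underline{R}(t) \cdot \liminf_{n\to\infty} c(n)^{1/n} \;=\; \underline{R}(t),$$
since $c(n)^{1/n} \to 1$ for any polynomially bounded $c(n)$.

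There is no real obstacle: the argument is a direct application of the interpolation lemma and the trivial submultiplicativity of rank under $\otimes$. The only point requiring mild care is the bookkeeping of constants — one must verify that the $O(n)$ factor produced by Lemma \ref{lemma:interpolation-structure} contributes only a subexponential multiplicative error, which it does because $c(n)^{1/n} \to 1$ whenever $c(n)$ grows polynomially (in fact, even subexponentially suffices).
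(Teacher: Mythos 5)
Your overall strategy is the one the paper intends (it presents Theorem \ref{th:border} precisely as Lemma \ref{lemma:interpolation-structure} applied to the disjoint hypergraph, starting from $\GHZ_{\underline{R}(t)}\geqdeg t$), but your execution has a genuine gap in how you unfold the conclusion of that lemma. The direct sum $\bigoplus_{i=1}^{O(n)}$ in Lemma \ref{lemma:interpolation-structure} (and in Definition \ref{def:preorder}) is a \emph{single global} direct sum of the entire entanglement structure, i.e.\ it asserts
$$\bigoplus_{i=1}^{c(n)}\bigl(\GHZ_{\underline{R}(t)}\bigr)_{\disj_n}\;=\;\bigoplus_{i=1}^{c(n)}\GHZ_{\underline{R}(t)}^{\otimes n}\;\geq\; t_{\disj_n},$$
not $\bigl(\GHZ_{c(n)\underline{R}(t)}\bigr)_{\disj_n}=\GHZ_{c(n)\underline{R}(t)}^{\otimes n}\geq t_{\disj_n}$ as you write. (That this is the intended reading is forced by the paper's remark that the extra resource is ``a global GHZ-state $\GHZ_{O(n)}^{|V_n|}$ of only a logarithmic number of qubits locally,'' by the way the proof of Theorem \ref{th:subadditivie} expands the same expression into $\GHZ_2^{\boxtimes f(n)}\boxtimes t_{H_n}\geq t'_{H_n}$, and by the fact that Definition \ref{def:preorder} must recover Strassen's $\bigoplus^{2^{o(n)}} t^{\boxtimes n}\geq t'^{\boxtimes n}$ for $H_n=\Str_n$.) Your version places an extra $\GHZ_{c(n)}$ on \emph{every} edge, which is an exponentially larger resource.

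This matters quantitatively, and an algebra slip in your last step hides the resulting failure: from your bound $R_{\disj_n}(t)\leq\bigl(c(n)\cdot\underline{R}(t)\bigr)^n$ the $n$-th root is $c(n)\cdot\underline{R}(t)$, \emph{not} $c(n)^{1/n}\cdot\underline{R}(t)$, and since $c(n)\in O(n)$ this diverges, so your chain of inequalities proves nothing. With the correct unfolding the argument closes immediately: subadditivity of rank under $\oplus$ and submultiplicativity under $\otimes$ give
$$R_{\disj_n}(t)\;\leq\;R\Bigl(\bigoplus_{i=1}^{c(n)}\GHZ_{\underline{R}(t)}^{\otimes n}\Bigr)\;\leq\;c(n)\cdot\underline{R}(t)^{\,n},$$
whose $n$-th root is $c(n)^{1/n}\underline{R}(t)\to\underline{R}(t)$, as desired. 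So the fix is local, but as written the proof does not establish the theorem.
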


Since $\underline{R}(W)=2$, we have in particular that $\asymprank_{\bdisj}(\W)=2$, thereby determining the asymptotic manifestation of the non-multiplicativity of the $W$-state. 
In \cite{christandl2019border}) it was shown that the inequality in Theorem \ref{th:border} can be strict which thereby opened the search for upper bounds on $\asymprank_{\bdisj}(t)$ beyond the border rank. 

The main lower bound method for border rank, generalized flattenings (see \cite{garg2019} and references therein), work in this setting as they are multiplicative under the (disjoint) tensor product. In order to see this, let $F: \CC^{d_1} \otimes \CC^{d_2} \otimes \cdots  \otimes \CC^{d_k}\rightarrow  \CC^{D_1} \otimes \CC^{D_2} $ be a linear map from tensors to matrices. Then define
$$\Young_{F}(t):=\frac{\rk(F(t))}{\max_s \rk F(s)},$$
where the maximisation is over simple tensors (i.e.\ $s=\alpha_1 \otimes \alpha_2 \otimes \cdots \otimes \alpha_k$) and $\rk$ is the matrix rank. We then have the following theorem, which extends the mentioned lower bound $\underline{R}(t) \geq R_F(t)$.
\begin{theorem}[Obstruction \cite{christandl2018tensor,christandl2019border}]
   Let $t$ and $F$ be as above. Then, $$\asymprank_{\bdisj}(t) \geq R_F(t).$$
\end{theorem}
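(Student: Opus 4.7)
The plan is to establish the bound at each finite $n$ and then take the limit. Specifically, I would show that
$$R(t^{\otimes n}) \geq R_F(t)^n \quad \text{for every } n \in \NN,$$
where $t^{\otimes n}$ is the disjoint tensor product of order $kn$, so that $R(t^{\otimes n}) = R_{\disj_n}(t)$. Taking $n$-th roots and then the liminf in $n$ gives $\asymprank_{\bdisj}(t) \geq R_F(t)$ directly from the definition.

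To prove the finite-$n$ inequality, I would fix an optimal decomposition $t^{\otimes n} = \sum_{i=1}^{r} s_i$ with $r = R(t^{\otimes n})$ and each $s_i$ a simple $kn$-tensor, and apply the map $F^{\otimes n}$, whose codomain $(\CC^{D_1}\otimes \CC^{D_2})^{\otimes n}$ is naturally regrouped as $\CC^{D_1^n}\otimes \CC^{D_2^n}$, i.e.\ a matrix space. Two standard matrix-rank facts then suffice: subadditivity, which gives
$$\rk F^{\otimes n}(t^{\otimes n}) \;\leq\; \sum_{i=1}^{r} \rk F^{\otimes n}(s_i),$$
and multiplicativity under the Kronecker product, which gives $\rk F^{\otimes n}(t^{\otimes n}) = \rk(F(t))^{n}$.

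The key intermediate step is a uniform bound on $\rk F^{\otimes n}(s_i)$. Here one uses that every simple $kn$-tensor $s$ in $(\CC^{d_1}\otimes \cdots \otimes \CC^{d_k})^{\otimes n}$ factorises, after the natural reordering of tensor factors, as $s = s^{(1)}\otimes \cdots \otimes s^{(n)}$ with each $s^{(j)}$ a simple $k$-tensor. Consequently $F^{\otimes n}(s) = F(s^{(1)})\otimes \cdots \otimes F(s^{(n)})$ is a Kronecker product of matrices of rank $\prod_j \rk F(s^{(j)}) \leq (\max_s \rk F(s))^n$. Substituting back yields
$$\rk(F(t))^{n} \;\leq\; r \cdot \Bigl(\max_{s \text{ simple}} \rk F(s)\Bigr)^{\!n},$$
which rearranges to $r \geq R_F(t)^{n}$, as required.

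No serious obstacle remains; the only point deserving mild care is the identification of the codomain of $F^{\otimes n}$ with a single matrix space and the corresponding multiplicativity of matrix rank under the Kronecker product, which is precisely where the hypothesis that $F$ lands in a $2$-tensor space enters. The argument also clarifies why ``generalized flattenings'' are a natural obstruction tool for $\geqas_{\bdisj}$: subadditivity of matrix rank controls sums of simple tensors, while multiplicativity under Kronecker products lets the bound propagate perfectly across disjoint copies.
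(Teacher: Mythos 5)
Your proof is correct and follows exactly the argument the paper alludes to (and which the cited references carry out): a rank decomposition of $t^{\otimes n}$ into simple $kn$-tensors, each factoring into simple $k$-tensors, combined with subadditivity of matrix rank and multiplicativity of $\rk F(\cdot)$ under the Kronecker product, giving $R(t^{\otimes n})\geq R_F(t)^n$ and hence the asymptotic bound. No gaps.
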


\begin{example} \label{Ex:flattening}
In \cite{christandl2019border} a flattening lower bound of 4.5 was obtained for a specific tensor of border rank at most 5. This in particular showed that the border rank indeed equals 5, but more so that 
    $$\asymprank_{\bdisj}(t) \geq 4.5.$$ Border rank is non-multiplicative for this tensor with the best upper bound on $\asymprank_{\bdisj}(t)$ being 4.746368884 obtained by using 7 copies \cite{christandl2019border}.
\end{example}

Note that $Q_\bdisj$ always equals to one and is thus not an interesting quantity to consider.

\paragraph{Lattice}

The study of lattice conversions was first considered in \cite{christandl2020tensor,christandl2023resource} and was motivated by the use of tensor networks and for the description of many-body quantum systems: Let $\psi_n$ be a sequence of $n$-body quantum states that can be represented by an underlying \emph{entanglement structure} given by $t'_{\lattice_n}$, i.e.\ $t'_{\lattice_n} \geq \psi_n$. If now $t_{\lattice_n} \geq t'_{\lattice_n}$ then also $t_{\lattice_n} \geq \psi_n$ and we conclude that $\psi_n$ can also be represented by the entanglement structure $ t_{\lattice_n}$. Depending on the task at hand, converting to a different entanglement structure can have theoretical and numerical benefit for the understanding of many-body physics. Concrete lattice conversions based on polynomial interpolation were introduced in 
\cite{christandl2020tensor} and the importance of a small additional direct sum was noted. In \cite{christandl2023resource} lattice conversions were developed into a full resource theory for tensor networks. 

Here we want to change this viewpoint. Instead of using $t$ and $t'$ to construct entanglement structures $t_{\lattice_n}$ and $t'_{\lattice_n}$, and to study their conversion under restriction as $O(n)$-tensors in their own right, we want to shift the focus back to $t$ and $t'$ and only use $t_{\lattice_n}$ and $t'_{\lattice_n}$ as vehicles to inform our understanding of $t$ and $t'$. That is, we want to consider $\geq_{\lattice_n}$ ($\lattice_n$-restriction), $\geqas_{\blattice}$ ($\blattice$-restriction) as well as the corresponding ranks as objects and tensor parameters associated to $t$ (and $t'$).

The results that have been obtained in the context tensor networks can then be formulated in the following tensor-centric way.

\begin{theorem}[Construction \cite{christandl2020tensor}] \label{th:lattice}
    $t \geqdeg t'$ implies $$t \geqas_{\blattice} t'.$$
\end{theorem}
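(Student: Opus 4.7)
The proof will essentially be a one-line application of Lemma \ref{lemma:interpolation-structure} once the definitions are unpacked, so the plan is more about checking that the asymptotic bookkeeping works out than about inventing anything new.

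First I would unpack what needs to be shown. By Definition \ref{def:preorder}, $t \geqas_{\blattice} t'$ means there exists a function $f(n) \in o(n)$ such that
\[
\bigoplus_{i=1}^{2^{f(n)}} t \;\geq_{\lattice_n}\; t',
\]
i.e.\ the corresponding lattice-placed tensors satisfy the restriction relation on the left-hand side of $\geq$. So I need to produce such an $f$ starting from the hypothesis $t \geqdeg t'$.

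The key step is to invoke Lemma \ref{lemma:interpolation-structure} with the specific choice $H_n = \lattice_n$. Since $\lattice_n$ is a $k$-uniform hypergraph and $t \geqdeg t'$, the lemma immediately yields constants such that
\[
\bigoplus_{i=1}^{O(n)} t \;\geq_{\lattice_n}\; t'.
\]
This is exactly the statement we want, modulo the rewriting of $O(n)$ as $2^{o(n)}$.

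The last step is the trivial growth comparison: if the linear bound is $Cn$ for some constant $C$, then $Cn = 2^{\log_2(Cn)}$ and $\log_2(Cn) = O(\log n) \subseteq o(n)$, so we can take $f(n) := \lceil \log_2(Cn) \rceil$ and enlarge the direct sum harmlessly. Therefore $\bigoplus_{i=1}^{2^{f(n)}} t \geq_{\lattice_n} t'$ holds with $f(n) \in o(n)$, which is exactly $t \geqas_{\blattice} t'$.

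There is no real obstacle here: Lemma \ref{lemma:interpolation-structure} already does the heavy lifting, packaging the polynomial-interpolation trick (Lemma \ref{lemma:interpolation}) together with the fact that the degeneration error degree for a fixed $t \geqdeg t'$ is a constant independent of $n$, so that only a linear---hence sub-exponential---direct sum is needed to turn degeneration into exact restriction on every $\lattice_n$. The only thing worth emphasising in the write-up is that the hypothesis $t \geqdeg t'$ is a statement about the \emph{single} tensors $t, t'$, so the degree $d$ and error $e$ appearing implicitly in $\geqdeg^e_d$ are constants, which is why the direct-sum overhead in Lemma \ref{lemma:interpolation-structure} is $O(n)$ rather than something that also depends on $n$ through $t$ itself.
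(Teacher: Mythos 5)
Your proposal is correct and matches the paper's (implicit) argument exactly: the theorem is precisely Lemma \ref{lemma:interpolation-structure} specialised to $H_n=\lattice_n$, combined with the observation that an $O(n)$-fold direct sum is in particular a $2^{o(n)}$-fold one, and your remark that the degeneration degree and error for the fixed pair $t \geqdeg t'$ are constants independent of $n$ is the right point to emphasise.
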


\begin{example} A physically motivated example is the PEPS presentation of the resonating valence bond state (RVB) on the kagome lattice \cite{RVB}. In the kagome lattice regular triangles surround regular hexagons. Whereas previously, a bond dimension of three was obtained, in \cite{christandl2020tensor} it was shown that ${\EPR_2}_{\triangle}\geqdeg\lambda$, where $\lambda =e_1 \wedge e_2 \wedge e_3 + e_3 \otimes e_3 \otimes e_3$. Placing the tensors on the triangles of the lattice, by Theorem \ref{th:lattice}, one finds
${\EPR_2}_{\triangle}\geqas_{\bkagome} \lambda$
or, more precisely,
$$\bigoplus_{i=1}^{O(n)}{\EPR_2}_{\triangle} \geq_{\kagome_n} \lambda.$$
This result indeed requires the small direct sum, as it was shown in \cite{christandl2023resource}
that for all $n$
$${\EPR_2}_{\triangle}\not\geq_{\kagome_n} \lambda$$
for a kagome lattice with boundary. In conclusion, physical properties of the RVB state can be computed when having access to a linear number of parallel bond dimension two computations \cite{christandl2020tensor}.
\end{example}

Even though a small global $\GHZ$-state was used, the theorem is still based on a plaquette-by-plaquette degeneration. In \cite{christandl2023resource} it was shown that asymptotic lattice restrictions are possible beyond this construction, i.e.\ there are cases where $t\not\geqdeg t'$, but $t \geqas_{\blattice} t'$, exhibiting that the $\blattice$-restriction preorder is distinct from the degeneration preorder.

We will now turn to the discussion of obstructions, which are again obtained by utilizing generalized flattenings. Since the lattice is more connected than the disjoint hypergraph, not all flattenings will be multiplicative and we need to restrict our use somewhat, but luckily not by much.

We call a generalized flattening a \emph{Young flattening}\footnote{We recommend \cite[Section 8.2.2.]{landsberg2017geometry} and note that we here, due to our application, only consider the application of the map on the third tensor factor.} of $\CC^{d_1}\otimes \CC^{d_2}\otimes \CC^{d_3}$ if it has the following structure
$$F(t)=\left( \id \boxtimes Y\right) (t),$$
where the identity acts on the first two tensor factors and $Y$ maps the last tensor factor equivariantly into a set of matrices\footnote{This is what connects to representation theory, therefore the name Young attached to it.}: 
$$Y: \CC^{d_3}\rightarrow \CC^{d_3'} \otimes \CC^{d_3''}.$$
 That is, there are $\GL_{d_3}$-representations $a$ and $b$ of dimensions $d'_3$ and $d''_3$, respectively, s.th.
$$Yg=\left( a(g)\otimes b(g) \right) Y$$
for all $g \in \GL_{d_3}$.
The combined map $F$ then maps the tensor space into a matrix space:
$$F=\id \boxtimes Y: \CC^{d_1}\otimes \CC^{d_2}\otimes \CC^{d_3}\rightarrow (\CC^{d_1}\otimes \CC^{d_3'} ) \otimes ( \CC^{d_2}\otimes \CC^{d_3''}).$$

Following \cite{christandl2023resource} we note that, since the Young flattening only acts non-trivially on one of the tensor spaces, we can group the others and still preserve the multiplicativity of the flattening bound that we had discussed earlier for the disjoint hypergraph. More precisely, consider a lattice which can be folded onto a fan. With this we mean that there is a grouping of the vertices s.th.\ the lattice turns into a fan after grouping. In Figure 7 (a) of \cite{christandl2023resource}, a triangular lattice is folded onto a fan resulting in a six-fold covering of the fan. 

The following theorem is inspired by \cite[Section V.B.]{christandl2023resource}

\begin{theorem}[Obstruction] \label{th:Young}
    Consider a lattice that can be folded onto a fan with $c$-fold covering and $F$ a Young flattening of $(\CC^{d_1})^{\otimes c}\otimes (\CC^{d_2})^{\otimes c}\otimes (\CC^{d_3})^{\otimes c}$. Then $t \geqas_{\blattice} t'$ implies 
    $ \rk F(t^{\boxtimes c}) \geq \rk F(t'^{\boxtimes c})$.
\end{theorem}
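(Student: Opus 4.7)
The plan is to transport the hypothesis from the lattice to a fan by folding, so that the fan tensor collapses into a Kronecker power of $t^{\boxtimes c}$, and then to apply the Young flattening exploiting its equivariance together with multiplicativity of matrix rank under the Kronecker product. The direct-sum overhead from Definition~\ref{def:preorder} will contribute only a subexponential factor that vanishes upon taking an $n$-th root.

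First I would unfold the hypothesis. By Definition~\ref{def:preorder}, $t\geqas_{\blattice} t'$ yields, for every $n$, a restriction which, in the spirit of the proof of Theorem~\ref{th:subadditivie}, may be written as $\GHZ_{2^{o(n)}}\boxtimes t_{\lattice_n}\geq t'_{\lattice_n}$ with the $\GHZ$ factor extending over all vertices. Since the lattice folds onto a fan with $c$-fold covering and vertex-grouping only weakens restriction (cf.\ Eq.~\eqref{eq:implies}), I may replace $\lattice_n$ by the fan. The folding collapses the $c$ lattice edges covering a single fan edge into their Kronecker product, so each feather carries $t^{\boxtimes c}$ (respectively $t'^{\boxtimes c}$). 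Viewing the fan tensor as a $3$-tensor by further grouping the feather vertices, I obtain
\[
\GHZ_{2^{o(n)}}\boxtimes (t^{\boxtimes c})^{\boxtimes n/c}\;\geq\;(t'^{\boxtimes c})^{\boxtimes n/c}.
\]

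Next I would apply the Young flattening. Since $Y$ is equivariant, so is $Y^{\otimes n/c}$ on the $n/c$-fold tensor power of the $d_3$-side, which extends $F$ to a Young flattening $F^{(n/c)}$ of the $n/c$-fold Kronecker power. Matrix rank only decreases under restriction, and for Young flattenings it is multiplicative under Kronecker products, giving
\[
\rk F^{(n/c)}\!\big((t^{\boxtimes c})^{\boxtimes n/c}\big)=\rk F(t^{\boxtimes c})^{n/c},
\]
and analogously for $t'^{\boxtimes c}$. Under a standard flattening of the auxiliary $\GHZ$ space, the $\GHZ_{2^{o(n)}}$ factor contributes only a rank of order $2^{o(n)}$, so combining multiplicativity with the restriction above yields
\[
2^{o(n)}\cdot \rk F(t^{\boxtimes c})^{n/c}\;\geq\;\rk F(t'^{\boxtimes c})^{n/c}.
\]
Taking $(n/c)$-th roots and noting that $2^{o(n)\cdot c/n}=2^{o(1)}\to 1$ as $n\to\infty$, I conclude $\rk F(t^{\boxtimes c})\geq \rk F(t'^{\boxtimes c})$.

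The main obstacle will be the bookkeeping for the direct-sum prefactor: one must identify a flattening of the combined $\GHZ$-plus-Kronecker space whose rank respects both multiplicativity on the Kronecker part and a merely linear dependence on the $\GHZ$ parameter, so that the overhead is absorbed by the $n$-th root. Once this is arranged, the equivariant extension of $Y$ and multiplicativity of matrix rank under Kronecker products finish the argument.
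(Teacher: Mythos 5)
There is a genuine gap, and it sits exactly at the step ``viewing the fan tensor as a $3$-tensor by further grouping the feather vertices.'' Up to that point your plan matches the paper: pass from the lattice to the fan by folding, so each feather carries $T=t^{\boxtimes c}$. But once you group all feather vertices into a single tensor factor, the hypothesis becomes an unstructured restriction of $3$-tensors, i.e.\ essentially $T\geqas_{\bStr} T'$, and the restriction map on the third factor is an \emph{arbitrary} linear map $m_3$ on $((\CC^{d_3})^{\otimes c})^{\otimes n/c}$, not a product $\otimes_j c^{(j)}$ of maps acting on the individual feathers. The map $Y^{\otimes n/c}$ is equivariant only for the product subgroup $\prod_j \GL\big((\CC^{d_3})^{\otimes c}\big)$, not for the full general linear group of the grouped space, so there is no identity $Y^{\otimes n/c}m_3=(a\otimes b)(m_3)\,Y^{\otimes n/c}$ to invoke. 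Consequently, applying $F^{(n/c)}$ to both sides of your tensor restriction does \emph{not} produce a restriction of matrices, and the sentence ``matrix rank only decreases under restriction'' has nothing to act on. Had your argument worked, it would prove that Young flattening ranks are monotone under Strassen's asymptotic restriction $\geqas_{\bStr}$ itself --- a strictly stronger statement than the theorem, and precisely the one the fan hypothesis is designed to avoid having to prove.

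The paper's proof keeps the feather vertices separate for exactly this reason. It writes each summand of the fan restriction as $T_i=A_i\otimes B_i\otimes\big(\otimes_j c^{(j)}_i\big)T_{\fan_n}$, where the third-side maps act feather-by-feather, applies the equivariance of $Y$ to each $c^{(j)}_i$ individually to conclude $F^{(n)}(T_i)\leq F^{(n)}(T_{\fan_n})$ as matrices, and only then uses additivity/subadditivity and multiplicativity of matrix rank. The only grouping permitted is on the two hub factors, on which $F$ acts as the identity. Your closing worry about the $\GHZ_{2^{o(n)}}$ bookkeeping is real but secondary and is handled more cleanly in the paper by staying with the direct-sum form $\sum_i T_i=T'_{\fan_n}$ and using subadditivity of matrix rank, rather than tensoring on a global $\GHZ$ whose legs would also sit on the feather vertices where $Y$ is not defined to act. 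To repair your proposal, undo the final grouping, keep the fan as a hypergraph with $n/c$ distinct feather vertices, and run the equivariance argument per feather.
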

\begin{proof}
$t \geqas_{\blattice} t'$ implies $T  \geqas_{\bfan} T'$ for $T:=t^{\boxtimes c}$ (and likewise with $'$) which is defined as 
$$\bigoplus^{2^{o(n)}}_{i=1} T_{\fan_n} \geq T'_{\fan_n}$$
or equivalently 
\begin{align}
    \label{eq:fan}
\sum^{2^{o(n)}}_{i=1} T_{i} = T'_{\fan_n} \ \mathrm{ with }  \ 
T_{\fan_n} \geq T_{i} \ \forall  \ i.
\end{align}
Consider the map $F^{(n)}$ given by 
\begin{align*}
  \left( \id  \boxtimes Y^{\otimes n}\right): &\left((\CC^{d_1})^{\otimes n}\right)\otimes \left((\CC^{d_2})^{\otimes n} \right) \otimes (\CC^{d_3})^{\otimes n} \\
  &\rightarrow \left((\CC^{d_1}\otimes \CC^{d_3'} )^{\otimes n} \right) \otimes \left( ( \CC^{d_2}\otimes \CC^{d_3''})^{\otimes n} \right).
  \end{align*}
Applying the linear map to both sides of \eqref{eq:fan} implies
$$\sum_{i=1}^{2^{o(n)}} F^{(n)}( T_{i}) = F^{(n)}(T'_{\fan_n}).$$ 
Writing $T_i=A_i \otimes B_i\otimes \left(\otimes_j c^{(j)}_i \right) T_{\fan_n}$ and using the covariance property of the Young flattening we find
\begin{align*}
    F^{(n)}( T_{i})&= F^{(n)}(  A_i \otimes B_i\otimes \left(\otimes_j c^{(j)}_i \right) T_{\fan_n})\\
    &= A_i \otimes B_i \otimes \left(\otimes_j a(c^{(j)}_i)  \otimes b(c^{(j)}_i) \right) F^{(n)}(T_{\fan_n})\\
    &= \left( A_i \otimes (\otimes_j a(c^{(j)}_i) )\right) \otimes \left(B_i \otimes (\otimes_j b(c^{(j)}_i) )\right) F^{(n)}(T_{\fan_n})\\
    & \leq F^{(n)}( T_{\fan_n}),
 \end{align*}
where in the last step we note that we have a restriction of matrices. 
 We find
 $$\bigoplus_{i=1}^{2^{o(n)}}F^{(n)}( T_{\fan_n})\geq F^{(n)}( T'_{\fan_n}).$$
Since 
$$F^{(n)}(T_{\fan_n})=F(T)^{\boxtimes n} $$
and similarly for $T'$, we have 
$$\bigoplus_{i=1}^{2^{o(n)}}F(T)^{\boxtimes n}\geq F(T')^{\boxtimes n}.$$
We now apply the matrix rank to this restriction. Since it is additive under direct sum, multiplicative under the Kronecker tensor product and monotone under restriction we find
\begin{align*}
2^{o(n)}(\rk F(T))^n  \geq (\rk F(T'))^n.
\end{align*}
Taking the $n$'th root and the large-$n$ limit concludes the proof.\end{proof}

\begin{example}
We will consider Young flattenings $F^{(c)}$ that are $c$-fold tensor products of Young flattenings $F$, in which case the condition 
$$ \rk F(t^{\boxtimes c}) \geq \rk F(t'^{\boxtimes c})$$ is equivalent to 
 $\rk F(t) \geq \rk F(t')$.
$F$, in turn, we take of the special form of a Koszul flattening. Koszul flattenings have been successfully used to obtain lower bounds for border rank, or obstructions to 
\cite{landsberg2015new}
\begin{align}
    \label{eq:koszul}
   \GHZ_{r} \geqdeg t.
\end{align} 

\sloppy 
Here, for a given $p$, $\CC^{d'_3}=\bigwedge^{p+1}(\CC^{d_3})$ and $\CC^{d''_3}=\bigwedge^{p}(\CC^{d_3})^*$ as $\GL_{d_3}$-representations. 
\fussy
Since by the Pieri rule
$\bigwedge^p(\CC^{d_3})\otimes (\CC^{d_3}) = \bigwedge^p(\CC^{d_3}) \oplus \ldots$ is multiplicity-free, there is a (up to scale) unique intertwiner $Y$ given by 
\begin{align*}
    Y:\CC^{d_3} & \rightarrow \bigwedge^{p+1} (\CC^{d_3}) \otimes \bigwedge^{p}(\CC^{d_3})^*\\
\ket{v} & \mapsto \sum_w (\ket{w} \wedge \ket{v}) \otimes \ket{w^*},
\end{align*}
where the sum extends over a basis with elements $\ket{w}$ of $\bigwedge^p(\CC^{d_3})$ with $\ket{w^*}$ the dual basis and we note that $\ket{w} \wedge \ket{v} \in \bigwedge^{p+1}(\CC^{d_3})$.

Consider now the case $d_1=d_2=d_3=3$ and the tensor 
$t=\GHZ_3$ and $t'$ as the tensor in \cite[Prop. 3.1]{christandl2019border}, which we had already considered in Example \ref{Ex:flattening}. As the proof in this reference shows, $\rk F(t')=9$, whereas $\rk F(\alpha\otimes \beta\otimes \gamma)=2$ which implies $\rk F(\GHZ_3)\leq 6$ and thus shows via the theorem that 
$\GHZ_3 \not\geqas_{\blattice} t'.$ 
\end{example}

In the special case $t=\GHZ_r$ that we considered, this statement can be strengthened \cite[Theorem 11]{christandl2023resource} to give $\GHZ_4 \not\geqas_{\blattice} t'$. If $t'$ is the matrix multiplication tensor $\EPR_{D \triangle}$, obstructions for ranges of $r$ and $D$ can be obtained. 

\begin{remark}
What makes generalized flattenings seem unnatural in our context is the fact that they depend on the embedding dimension, which means that the tensor needs to be regarded as an element in the vector space rather than in relation to the equivalence class $\sim$. In order to extend the Young flattening $F$ of $\CC^{d_1}\otimes \CC^{d_2}\otimes \CC^{d_3}$ to arbitrary dimensions $\CC^{d_1}\otimes \CC^{d_2}\otimes \CC^{D_3}$ with $D_3 \geq d_3$ one may consider defining \footnote{An alternative would be only to maximise over linear maps with image in $\CC^{d_3}$.}
$$F(t):= \sup_{t \geqdeg t' \in \CC^{d_1}\otimes \CC^{d_2}\otimes \CC^{d_3}} \rk F(t').$$
It would be interesting to explore this viewpoint on Young flattenings in future work. 
\end{remark}

\paragraph{Strassen}

Several of the ideas presented above in the context of hypergraphs, disjoint graphs and lattices apply to the Strassen hypergraph and were, in fact conceived in this context. This applies especially to Lemma \ref{lemma:interpolation-structure} in its application to border rank
$$\underline{R}(t) \geq \asymprank(t)$$
where
$$\asymprank(t):=\lim_{n\rightarrow \infty}R(t^{\boxtimes n})^{\frac{1}{n}},$$
which was used in order to derive upper bounds on the matrix multiplication exponent $\omega$ beyond Strassen's original rank-based algorithm \cite{burgisser2013algebraic}. 

A tool that goes beyond the polynomial interpolation that is behind this lemma and which integrally uses the Kronecker tensor product $\boxtimes$ is Strassen's laser method \cite{burgisser2013algebraic}. Here, it is central that asymptotic degeneration admits multiplicative cancellation,
\begin{align}
    \label{eq:cancel}
t \boxtimes  c \geqas t' \boxtimes  c \implies t \geqas t'
\end{align}
as we now will see. Consider an intermediate tensor $\iota$ for which transformations
$$\GHZ_\alpha \geqas \iota $$
and 
$$\iota \geqas {\EPR_2}_{\triangle} \boxtimes \ \GHZ_\beta,$$ 
can be found. Then note that $\GHZ_\alpha\sim \GHZ_{\alpha/\beta} \boxtimes  \GHZ_\beta$
if $\alpha$ is divisible by $\beta$ and set
$t=\GHZ_{\alpha/\beta} $, $c=\GHZ_\beta$ and $t'={\EPR_2}_{\triangle}$. Combining this we find
$$\GHZ_{\alpha/\beta}\boxtimes  \GHZ_\beta \sim \GHZ_\alpha \geqas \iota \geqas {\EPR_2}_{\triangle}\boxtimes  \GHZ_\beta.$$
which by \eqref{eq:cancel} implies 
$$\GHZ_{\alpha/\beta}\geqas {\EPR_2}_{\triangle}.$$
The following theorem, which is the essence of Strassen's laser method, records this computation.

\begin{theorem}[Construction] Let $\iota$ be a tensor. 
$\GHZ_\alpha\geqas \iota$ together with $\iota \geqas {\EPR_2}_{\triangle} \boxtimes \ \GHZ_\beta$ implies
$\omega \leq \log_2(\alpha/\beta)$.
\end{theorem}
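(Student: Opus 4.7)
The proof is essentially outlined in the paragraph preceding the theorem statement, so my plan is to carefully formalize that chain of reasoning while handling the divisibility issue that was swept under the rug.

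First, I would compose the two given asymptotic restrictions using transitivity of $\geqas$ to obtain $\GHZ_\alpha \geqas \iota \geqas \EPR_2{}_\triangle \boxtimes \GHZ_\beta$, and hence $\GHZ_\alpha \geqas \EPR_2{}_\triangle \boxtimes \GHZ_\beta$. Transitivity here should follow directly from unpacking Definition \ref{def:preorder} in the $\bStr$ case: composing two restrictions each with a subexponential direct sum overhead still yields a subexponential overhead, since $2^{o(n)} \cdot 2^{o(n)} = 2^{o(n)}$.

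Next, I would exploit the factorization $\GHZ_\alpha \sim \GHZ_{\alpha/\beta} \boxtimes \GHZ_\beta$ of the GHZ-state recalled earlier in the manuscript. When $\beta$ divides $\alpha$ this is immediate; in general, I would first tensor both sides of the chain with themselves $n$ times (using multiplicativity $\GHZ_r^{\boxtimes n} \sim \GHZ_{r^n}$) and choose $n$ so as to extract an integral power of $\GHZ_\beta$, then let $n \to \infty$ so that rational exponents (and ultimately the real number $\log_2(\alpha/\beta)$) become accessible. The cleanest presentation is probably just to note that $\asymprank$ is defined for real asymptotic growth rates, so we work with the rate $\log_2 \alpha - \log_2 \beta$ directly without insisting on divisibility at finite $n$.

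Then I would invoke the multiplicative cancellation property of asymptotic restriction, equation \eqref{eq:cancel}, with $t = \GHZ_{\alpha/\beta}$, $c = \GHZ_\beta$ and $t' = \EPR_2{}_\triangle$, to conclude
\[
\GHZ_{\alpha/\beta} \geqas \EPR_2{}_\triangle.
\]
Finally, I would translate this into a statement about $\omega$: by monotonicity of asymptotic rank under $\geqas$, we have $\asymprank(\GHZ_{\alpha/\beta}) \geq \asymprank(\EPR_2{}_\triangle)$, and since $\asymprank(\GHZ_r) \leq r$ (subnormalization of the cost on the unit) while $\omega = \log_2 \asymprank(\EPR_2{}_\triangle)$ as recorded in Section \ref{se:asy}, we obtain $\omega \leq \log_2(\alpha/\beta)$.

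The main obstacle I anticipate is the cancellation step, since \eqref{eq:cancel} is stated without proof in the text. If I had to justify it from scratch I would need the fact that $\GHZ_\beta$ itself has asymptotic rank exactly $\beta$, so that any subexponential overhead absorbed into $c^{\boxtimes n}$ can be peeled off on both sides at the level of asymptotic rates; the cleanest route is probably via the asymptotic spectrum / Strassen duality machinery, but within the scope of this plan I would just cite \eqref{eq:cancel} as given. The divisibility bookkeeping is the only routine annoyance.
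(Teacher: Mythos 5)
Your proposal is correct and follows essentially the same route as the paper, which itself only records the computation sketched in the preceding paragraph: transitivity, the factorization $\GHZ_\alpha \sim \GHZ_{\alpha/\beta} \boxtimes \GHZ_\beta$, cancellation via \eqref{eq:cancel}, and the translation to $\omega$ through $\omega = \log_2 \asymprank({\EPR_2}_\triangle)$ and monotonicity of $\asymprank$. Your extra attention to the divisibility of $\alpha$ by $\beta$ and to the unproved status of \eqref{eq:cancel} is more careful than the paper but does not change the argument.
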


The construction of an \emph{intermediate tensor} $\iota$ for which good values for $\alpha$ and $\beta$ can be obtained is based on the idea of placing EPR pairs in a larger outer structure which may be thought of as a scaffolding. Coppersmith and Winograd used for $\iota$
$$cw_q=\sum_{i=1}^q e_i \otimes e_i \otimes e_0 +e_0 \otimes e_i \otimes e_i +e_i \otimes e_0 \otimes e_i,$$
which has $q$-level EPR pairs embedded in a $W$ structure. Similar to the good border rank upper bound for the W-state one derives a good border rank bound also for $cw_q$ and thus a good, i.e. small, value for $\alpha$.

Since $\EPR$-pairs are the blocks of $cw_q$, the blocks in tensor powers are tensor products of $\EPR$-pairs when placed in different directions. In the language of algebraic complexity theory, these are rectangular matrix multiplication tensors. For the $m$'th power ($m$ divisible by three) these are $$\langle q^m, 1, 1\rangle, \langle q^{m-1}, q, 1 \rangle, \ldots, \langle q^{m/3}, q^{m/3}, q^{m/3}\rangle.$$ Each block appears many times, restricting to a fixed type, say the equally weighted one, results in the desired
$ \bigoplus \langle q^m,q^m,q^m \rangle$ with $\beta$ being determined by $q, m$ and the size of the direct sum, which is related to a variant of the subrank of the outer structure. The subrank of the $W$-state was determined by Strassen as $\asympsub (W) =2^{h(\frac{1}{3})}$ for $h$ the binary entropy and corresponds to the size of the direct sum.\footnote{The words 'essence of the method' and 'variants of the subrank' refer to the fact that those restrictions in the laser method cannot be arbitrary, but need to preserve the block structure, i.e.~be monomial restrictions.}

Employing sophisticated analysis of the block structure one obtains the currently best bounds of around $2.37$ on $\omega$ \cite{duan2022faster}. It remains open whether the lower bound of 2 can be achieved. Likely new new intermediate tensors would be required as there are barriers for achieving exponent 2 with $cw_q$, $q>2$  \cite{barriers-alman,barriers-christandl}.

Whereas Strassen's work focused on tensors of order three due to the motivation from the matrix multiplication problem, the techniques are often more general and have been investigated in particular for tensors $\GHZ_H$, i.e. tensors obtained by placing GHZ-states on the hyperedges \cite{vrana2017entanglement, christandl2019asymptotic, christandl2019tensor}.

We now turn to the discussion of obstructions for asymptotic restriction. This discussion must take its starting point in Strassen's remarkable characterisation theorem, which posits the existence of a complete set of monotones. In the following we use the symbol $F$ in a different way from earlier.

\begin{theorem}[Characterisation \cite{strassen1988asymptotic}] \label{th-Strassen} Let $t,t' $ be $k$-tensors. Then
    $$t \geqas_{\bStr} t' \quad \mathrm{iff} \quad  F(t) \geq F(t')$$ holds for all Strassen-$F$s, i.e. for all real-valued functions $F$, defined on all $k$-tensors, satisfying
    \begin{itemize}
        \item monotonicity under restriction: $s\geq s'$ implies $F(s)\geq F(s') $
        \item normalization on $\GHZ_r$:$F(\GHZ_r)=r$
        \item multiplicativity under Kronecker tensor product: $F(s\boxtimes s')=F(s)F(s')$
        \item additivity under direct sum: $F(s \oplus s')=F(s)+F(s')$
    \end{itemize}
\end{theorem}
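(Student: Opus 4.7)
The plan is to handle the two directions separately. The forward direction is routine: unpacking the definition of $\geqas_{\bStr}$, one obtains for large $n$ an inequality of the form $\bigoplus_{i=1}^{2^{o(n)}} t^{\boxtimes n} \geq t'^{\boxtimes n}$; applying any Strassen-$F$ to both sides and using additivity, multiplicativity, monotonicity and normalisation gives $2^{o(n)} F(t)^n \geq F(t')^n$, and extracting $n$-th roots and passing to the limit yields $F(t) \geq F(t')$.

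The reverse implication is the substantive content and requires the machinery of preordered commutative semirings. The first step is to set up the algebraic structure: work with the set $S$ of $k$-tensors modulo $\sim$, equipped with the operations $\oplus$ and $\boxtimes$. This is a commutative semiring in which the natural numbers embed as $r \mapsto \GHZ_r$, and $\geqas_{\bStr}$ is a preorder on $S$ compatible with both semiring operations. A crucial structural property is that $\geqas_{\bStr}$ is Archimedean with respect to the scale of GHZ-states: every $t$ is sandwiched as $\GHZ_{R(t)} \geq t$ and $t \geqas_{\bStr} \GHZ_{\asympsub_{\bStr}(t)}$, and the $2^{o(n)}$ slack in the definition is exactly what provides the cancellation needed to make the preorder well behaved under scaling by GHZ-states.

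The main step is then an application of a representation theorem for preordered semirings, of Kadison--Dubois / Positivstellensatz type: under the axioms above, the preorder $\geqas_{\bStr}$ coincides with the intersection over all real-valued maps $F \colon S \to \RR_{\geq 0}$ that are monotone, additive, multiplicative, and normalised by $F(\GHZ_r) = r$. Concretely, given $\neg(t \geqas_{\bStr} t')$, one constructs by a Zorn / Hahn--Banach style argument a maximal preorder on $S$ extending $\geqas_{\bStr}$ for which the relation $t \geqas_{\bStr} t'$ still fails, quotients by its antisymmetric part to obtain a totally ordered semiring, and embeds this into $\RR$ using the Archimedean property. The image of this embedding is the desired separating $F$.

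The main obstacle is setting up and applying the representation theorem. One must verify the Archimedean condition carefully (the $2^{o(n)}$ slack in the definition is precisely what enables this), check that the semiring has no pathologies preventing extension to an ordered ring of fractions, and confirm that the resulting real-valued semiring homomorphism automatically respects monotonicity under exact restriction, not merely asymptotic restriction. Strassen's original proof invokes classical results of Kadison and of Becker--Schwartz on ordered commutative semirings; more recent axiomatic presentations of the Positivstellensatz for preordered semirings give cleaner routes. Once the representation theorem is invoked, the theorem follows essentially by construction.
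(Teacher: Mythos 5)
The paper does not prove this theorem; it is quoted from Strassen's 1988 work, and your outline correctly reproduces the standard argument from that literature: the easy direction by applying a Strassen-$F$ to $\bigoplus_{i=1}^{2^{o(n)}} t^{\boxtimes n} \geq t'^{\boxtimes n}$ and taking $n$-th roots, and the hard direction via the Kadison--Dubois-type representation theorem for Archimedean preordered commutative semirings (the asymptotic spectrum). Your sketch is consistent with the cited proof, so there is nothing to correct beyond noting that the Archimedean condition you need is simply $\GHZ_{R(t)} \geq t \geq \GHZ_1$ for nonzero $t$, which holds because tensor rank is finite and every nonzero tensor restricts to a simple tensor.
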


Each Strassen-$F$ is thus an obstruction to asymptotic restriction as $F(t')>F(t)$ implies $t\not\geqas_{\bStr} t'$. It is remarkable that a complete set of such structured obstructions exist and one might wonder what the Strassen-$F$s are and if one can construct them. Easy to construct are the so-called gauge points obtained by grouping of all but tensor factor $j$, to obtain matrix $t_j$ and then considering the matrix rank $\rk(t_j)$. Whereas Strassen was able to construct further (Strassen-)$F$s for subrings of tensors, which include the matrix multiplication tensor, general Strassen-$F$s were unknown until recently, where the \emph{quantum functionals} $F_\theta(t)=2^{E_\theta(t)}$ with
$$E_\theta(t):= \sup_{t\geqdeg t'} \sum_j \theta_j H(t_j),$$ 
were constructed \cite{christandl2023universal}. Here, $H(t_j)$ is the Shannon entropy of the squared singular values of the matrix $t_j$ and $\theta$ is a probability distribution on the set $\{1, 2, \ldots, k\}$.

\begin{theorem}[Obstruction\cite{christandl2023universal}]
    The quantum functionals $F_\theta$ are \linebreak Strassen-$F$s. 
\end{theorem}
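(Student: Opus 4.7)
The plan is to verify the four defining axioms of a Strassen-$F$ one at a time for each $F_\theta=2^{E_\theta}$. Monotonicity and normalization will be essentially immediate from the definition; multiplicativity under $\boxtimes$ and additivity under $\oplus$ are the genuinely structural statements and will need input about how normalized squared-singular-value distributions of the flattenings $t_j$ behave under Kronecker product and direct sum.

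For monotonicity, I would note that if $s\geq s'$ then a fortiori $s\geqdeg s'$, so by transitivity of degeneration $\{u:s\geqdeg u\}\supseteq\{u:s'\geqdeg u\}$; since $E_\theta$ is defined as a supremum over this set, $E_\theta(s)\geq E_\theta(s')$. For normalization, I would compute directly that the $j$-th flattening of $\GHZ_r^{(k)}$ has all singular values equal to $1$ with multiplicity $r$, so $H((\GHZ_r)_j)=\log_2 r$ and hence $E_\theta(\GHZ_r)\geq\sum_j\theta_j\log_2 r=\log_2 r$. The matching upper bound comes from the fact that $\GHZ_r\geqdeg t$ forces $\underline{R}(t)\leq r$, which bounds $\rk(t_j)\leq r$ and therefore $H(t_j)\leq \log_2 r$ for each $j$; taking $\theta$-weighted sum and supremum preserves this bound.

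For multiplicativity, the inequality $E_\theta(s\boxtimes s')\geq E_\theta(s)+E_\theta(s')$ is obtained by picking near-optimizers $u,u'$ of $E_\theta(s)$ and $E_\theta(s')$, using $s\boxtimes s'\geqdeg u\boxtimes u'$, and observing that the $j$-th flattening of a Kronecker product is the Kronecker product of the $j$-th flattenings; its squared singular values are then products of the two originals, giving $H((u\boxtimes u')_j)=H(u_j)+H(u'_j)$ and the desired bound after weighting by $\theta$. The reverse inequality is the main obstacle, since there is no product structure to exploit for an arbitrary $v$ with $s\boxtimes s'\geqdeg v$. The cleanest route, following \cite{christandl2023universal}, is to reinterpret $E_\theta$ as the value at $\theta$ of the support function of the moment polytope of the $\mathrm{GL}$-orbit closure, and then invoke the geometric-invariant-theoretic fact that the moment polytope of $s\boxtimes s'$ is contained in the Minkowski sum of those of $s$ and $s'$; evaluating support functions turns the Minkowski containment into the missing additivity for $E_\theta$.

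For additivity under direct sum, the target identity rewrites as $E_\theta(s\oplus s')=\log_2\!\bigl(2^{E_\theta(s)}+2^{E_\theta(s')}\bigr)$. I would establish the $\geq$ direction by taking optimizers $u,u'$ and considering rescaled direct sums $\alpha u\oplus\beta u'$: the $j$-th flattening is block-diagonal with blocks $\alpha u_j$ and $\beta u'_j$, so its normalized squared singular values form a mixture $\lambda p_j\cup(1-\lambda)q_j$ whose weight $\lambda$ is governed by the ratio of Frobenius norms. The grouping identity $H(\lambda p\cup(1-\lambda)q)=h_2(\lambda)+\lambda H(p)+(1-\lambda)H(q)$ combined with an optimal choice of $\alpha,\beta$ (chosen to make $\lambda$ a single number $\lambda^\star=2^{E_\theta(s)}/(2^{E_\theta(s)}+2^{E_\theta(s')})$ across all $j$, which is possible on the optimizers) yields the claimed expression. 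The $\leq$ direction, parallel to the reverse inequality in multiplicativity, is the delicate one and again resolves at the level of moment polytopes, where the relevant fact is that the polytope of $s\oplus s'$ is the convex hull of appropriately rescaled copies of the polytopes of $s$ and $s'$. Both hard directions therefore ultimately rest on moment-polytope geometry established in \cite{christandl2023universal}, which I would cite rather than reprove.
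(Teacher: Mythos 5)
The paper contains no proof of this theorem at all: it is imported verbatim from \cite{christandl2023universal}, so there is no internal argument to compare yours against, and the comparison has to be with the proof in that reference. Measured against it, your treatment of the two easy halves is sound: monotonicity from the nesting of degeneration cones, normalization from the flat marginal spectrum of $\GHZ_r$ together with the border-rank bound on flattening ranks, and the ``$\geq$'' directions of multiplicativity and additivity by tensoring, respectively direct-summing, near-optimizers. (For the direct sum, the mixture weight $\lambda$ is automatically the same for every $j$ --- it is just the squared-norm ratio of the two blocks --- so no special choice ``on the optimizers'' is needed there.)

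The gap is in the two hard directions, which are the entire content of the theorem. You attribute sub-multiplicativity to the claim that the moment polytope of $s\boxtimes s'$ is contained in the Minkowski sum of those of $s$ and $s'$, and sub-additivity to an analogous convex-hull statement for $s\oplus s'$. Neither is the fact used in \cite{christandl2023universal}, and the first is doubtful as stated: the group acting on $s\boxtimes s'$ is $\prod_j \GL(V_j\otimes W_j)$, much larger than $\prod_j \GL(V_j)\times\GL(W_j)$, so the orbit closure (hence the moment polytope) of the Kronecker product is not controlled by those of the factors; moreover a Minkowski sum of spectra does not preserve normalization, and concavity bounds the entropy of a mixture from \emph{below}, not above, so the containment would not deliver $E_\theta(s\boxtimes s')\leq E_\theta(s)+E_\theta(s')$ even if it held. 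The actual argument introduces a second, ``upper'' quantum functional defined through asymptotic representation theory --- the asymptotic support of the isotypic decomposition of $t^{\otimes n}$ under $\prod_j\GL(V_j)$ via Schur--Weyl duality --- which is manifestly sub-multiplicative and sub-additive (through the semigroup property of Littlewood--Richardson supports and the inequality $H(\bar\lambda)\le H(\bar\mu)+H(\bar\nu)$ whenever $c^{\lambda}_{\mu\nu}\neq 0$) and manifestly monotone under restriction, and the substance of the theorem is the nontrivial equality of the upper and lower functionals. Deferring the hard part to \cite{christandl2023universal} is legitimate --- the paper under review does exactly that --- but the mechanism you name is not the one that works, so as written your sketch points the reader down a dead end precisely where the theorem stops being routine.
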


It remains open whether these are all Strassen-$F$s; if true, this would imply $\omega=2$.

\section*{Conclusion}
\label{sec:con}
We have introduced a new family of asymptotically defined preorders on tensors as well as their natural associated asymptotic notions of rank and subrank (Definition \ref{def:preorder}). The family refines Strassen's asymptotic restriction, asymptotic rank and asymptotic subrank, which form an extreme special case. 

In the context of this special case and motivated by the recent progress on the cap set problem, the interest in multiparticle entanglement and renewed interest in the matrix multiplication problem, several new asymptotic tensor parameters such as the asymptotic slice rank \cite{tao} and the quantum functionals \cite{christandl2023universal} have been introduced. In addition to specific results, even their global structure has been investigated (see \cite{discreteness} and references therein). It will be interesting to view these tensor parameters, and the reason for their introduction, through the lens of hypergraphs and their associated preorders, ranks and subranks (Open problem 1). 

Remarkable about Strassen's asymptotic restriction is existence of a complete set of obstructions (Theorem \ref{th-Strassen}). 
As the new notions of restrictions are weaker than asymptotic restriction (they imply the latter), it begs the question of whether a similar characterization, with a larger set of functions, exists (Open problem 2). Although even in Strassen's case, it is not clear what the set precisely is, we have identified with the Young flattenings a first family of new functions in the case of certain lattices (Theorem \ref{th:Young}).

On the constructive side, we have employed polynomial interpolation, which had already seen a generalization from the use in matrix multiplication to non-additivity \cite{christandl2018tensor} and tensor networks \cite{christandl2020tensor}, to the understanding of the robustness of the new preorders in cases that include lattices (Theorem \ref{th:subadditivie}). We leave it as a main open question to understand whether this robustness persists also in sequences of denser hypergraphs (Open problem 3). 

We have included the description of the laser method in this manuscript in order to highlight that, in contrast to polynomial interpolation, there are constructions that integrally use properties of the asymptotic preorder and whose generalization to the new asymptotic preorders will require an adequate limitation of the method (Open problem 4). 

Abstracting from the concrete remarks on constructions and obstructions above, we hope in particular that the study of sequences of denser hypergraphs will illuminate the study of tensors through their associated preorders (Open problem 5). 

Finally, one may view our results as attaching preorders and functionals to hypergraphs and sequences thereof which obey monotonicity under hypergraph homomorphisms. This opens the possibilty to learn about hypergraphs through the study of tensors. 

\section*{Acknowledgements}

\sloppy
We thank Freek Witteveen for support with the graphical tensor presentations and the Section of Mathematics at the University of Geneva for their hospitality. We acknowledge financial support from the European Research Council (ERC Grant Agreement No.~818761), VILLUM FONDEN via the QMATH Centre of Excellence (Grant No.~10059) and the Novo Nordisk Foundation (grant \linebreak NNF20OC0059939 ‘Quantum for Life’). We also thank the National Center for Competence in Research SwissMAP of the Swiss National Science Foundation (Grant No.~205607).
\fussy

\theendnotes

\bibliographystyle{unsrt}

\providecommand{\noopsort}[1]{}\providecommand{\singleletter}[1]{#1}%

\end{document}